\documentclass{article}
\usepackage[a4paper,margin=1in]{geometry}
\usepackage[english]{babel}
\usepackage{xr}
\usepackage{float}

\usepackage{setspace}
\onehalfspacing


\usepackage{tikz}
\usetikzlibrary{arrows,automata,positioning}
\usepackage{pgfplots}
\pgfplotsset{compat=newest}

\usepackage{mathrsfs}
\usepackage{algorithm}
\usepackage{algpseudocode}
\usepackage{url}
\usepackage{hyperref}
\usepackage{breakurl}
\hypersetup{breaklinks=true}
\usepackage[]{amsmath}
\usepackage{amsthm}
\usepackage{fix-cm}
\usepackage[]{amssymb}
\usepackage[]{latexsym}
\usepackage[right]{eurosym}
\usepackage[T1]{fontenc}
\usepackage[]{graphicx}
\usepackage[]{epsfig}
\usepackage{fancyhdr}
\usepackage{multirow}
\usepackage{subcaption}
\usepackage{wrapfig}
\usepackage[authoryear,round]{natbib}

 \usepackage{nopageno}

\allowdisplaybreaks

\makeatletter




\newcommand{\we}{\wedge}


\newcommand{\lm}{\lambda_{\max}}


\newcounter{modcount}
\newcommand{\modulo}[2]{%
\setcounter{modcount}{#1}\relax
\ifnum\value{modcount}<#2\relax
\else\relax
\addtocounter{modcount}{-#2}\relax
\modulo{\value{modcount}}{#2}\relax
\fi}
\newcommand{\tablepictures}[4][c]{\begin{tabular}[#1]{@{}c@{}}#2\vspace{0.5cm}\\(\alph{#4}) #3\end{tabular}}
\newcounter{gridsearch}
\newcommand{\tabpic}[2]{
    \stepcounter{gridsearch}
    \modulo{\thegridsearch}{2}
    \ifnum\value{modcount}=0
        \tablepictures[t]{#1}{#2}{gridsearch}\\[2.0cm]
    \else
        \tablepictures[t]{#1}{#2}{gridsearch}&~&
    \fi
}

\makeatother
\hyphenation{Glei-chung sto-cha-sti-sche Ge-burts-tags-kind ab-ge-ge-be-nen exi-stie-ren re-pre-sen-tation finanz-markt-aufsicht Modell-un-sicher-heit finanz-markt-risi-ken rung-gal- dier gering-sten} \arraycolsep1mm

\newtheorem{lemma}{Lemma}[section]
\newtheorem{proposition}[lemma]{Proposition}

\newtheorem{example1}[lemma]{Example}
\newtheorem{rem1}[lemma]{Remark}
\newtheorem{assumption}[lemma]{Assumption}
\newtheorem{me1}[lemma]{Mechanism}

\newenvironment{remark}{\begin{rem1}\rm}{\end{rem1}}
\newenvironment{example}{\begin{example1}\rm}{\end{example1}}

\usepackage{color}


\newcommand\ind[1]{\mathbb{I}_{\{#1\}}}

\definecolor{darkseagreen}{rgb}{0.56, 0.74, 0.56} 
\definecolor{carolinablue}{rgb}{0.6, 0.73, 0.89} 
\definecolor{columbiablue}{rgb}{0.61, 0.87, 1.0} 
\definecolor{coralpink}{rgb}{0.97, 0.51, 0.47} 
\definecolor{coral}{rgb}{1.0, 0.5, 0.31} 
\definecolor{darkchampagne}{rgb}{0.76, 0.7, 0.5}  
\definecolor{deepchampagne}{rgb}{0.98, 0.84, 0.65} 

\begin{document}

\title{
The not-so-hidden risks of `hidden-to-maturity' accounting: \\on depositor runs and bank resilience
}


\author{
Zachary Feinstein\thanks{Stevens Institute of Technology, School of Business, Hoboken, NJ 07030, USA. {\tt zfeinste@stevens.edu}.} 
\and
Grzegorz Ha{\l}aj\thanks{European Central Bank, Frankfurt, DE. {\tt grzegorz.halaj@ecb.europa.eu}.}
\and 
Andreas S{\o}jmark\thanks{London School of Economics, Department of Statistics, London, WC2A 2AE, UK. {\tt a.sojmark@lse.ac.uk}.}
}
\date{\today}
\maketitle

\begin{abstract}
We introduce a simple model of depositor runs to capture run risks at financial institutions based on their balance sheet composition. Specifically, we consider a reduced potential to raise capital from liquidity buffers under stress, during a stylized run driven by depositor scrutiny and further fueled by fire sales in response to withdrawals. The setup is inspired by the Silicon Valley Bank meltdown in March 2023 and concerns about the role of held-to-maturity portfolios. In particular, we apply our model to show the build-up of balance sheet vulnerabilities at Silicon Valley Bank before its default. More generally, the model may serve as a tool for analyzing which characteristics of the balance sheet are critical for banking system regulators to adequately assess run risk and resilience. Furthermore, we extend the model to include a tractable optimization problem which addresses the designation of held-to-maturity assets and provides a simple approach to quantifying banks' ability to hold those assets without resorting to remarking. To illustrate this framework, we calibrate the parameters to Silicon Valley Bank's balance sheet data and examine the bank's funding risk and implied risk tolerance in the years 2020--22 leading up to its collapse.

JEL: C62, G21, G11

Keywords: bank runs, fire sales, accounting standards, funding risk
\end{abstract}


\section{Introduction}\label{sec:intro}

Despite a better understanding of bank runs and policies to mitigate them, run risks remain a significant concern. This was forcefully demonstrated by the March 2023 collapse of Silicon Valley Bank (SVB) and its serious repercussions across the broader financial markets. As discussed in the review of the Federal Reserve's supervision and regulation of SVB \cite{FRB2023}, two of the most significant internal factors were a high reliance on uninsured deposits and a stark increase in the proportion of assets designated as held-to-maturity (HtM). Based on this, the Vice Chair for Supervision at the Federal Reserve, Michael S.~Barr, concluded that \emph{`we should re-evaluate the stability of uninsured deposits and the treatment of held to maturity securities in our standardized liquidity rules and in a firm's internal liquidity stress tests'}.\footnote{See page 3 of https://www.federalreserve.gov/publications/files/svb-review-20230428.pdf} The goal of this paper is to take a step in that direction, by developing a simple theoretical framework that can help to shed light on and quantify how the level of uninsured deposits and the reliance on HtM securities portfolios impact run risks. Specifically, using the framework, we can assess the safe levels of the HtM portfolios that mitigate the run risk.

Portfolios held to maturity are long-term investments that are not meant to be a source of readily available liquidity. Since they are intended to be held to their maturity, accounting rules provide a cushion to banks' capital by not recognizing changes in the fair value of these portfolios whenever market conditions move. Instead, changes in risk factors create the so-called unrealized gains and losses in the HtM portfolios. When rates began to rise in Q1 of 2022, unrealized losses in SVB's HtM holdings grew rapidly, and the bank found itself limited in its ability to adjust its portfolio to the changing rate environment as selling part of the HtM securities would require its entire HtM holdings to be reclassified and marked-to-market \cite{FRB2023}. In other words, it appears SVB was relying too heavily on the HtM designation which is not meant for assets that may need to be sold early, even if only in unlikely events triggering sudden bank runs, i.e., material funding withdrawals. Indeed, for securities to be classified as HtM, the U.S.~GAAP rules require a bank to have the positive intent and ability to hold those securities to maturity.\footnote{See FASB ASC 320-10-25-1(c) which can be found at https://asc.fasb.org/1943274/2147481736} Both the very wording of these rules and the chain of events at SVB raises the question of how to sensibly assess such ability in relation to a bank's current funding structure and asset composition, when taking into account how these may interact with run risks. It should be noted that the IFRS9 accounting regime, adopted for instance by the European Union, has no explicit mention of a bank's ability to hold securities to maturity, but a specific recognition of assets must be commensurate with the bank's business model,\footnote{See https://www.ifrs.org/issued-standards/list-of-standards/ifrs-9-financial-instruments/. We also note that banks deciding to hold assets to collect payments, e.g., interest payments, would follow amortized cost accounting similar to the HtM option in GAAP. Moreover, EU banks' balance sheet sensitivity to interest rate shocks, including HtM portfolios, will be subject to additional supervisory scrutiny as part of the interest rate risk in banking book (IRRBB) limits, covered by the Supervisory Outlier Test (SOT) of the European Banking Authority (see a technical standard EBA/RTS/2022/10).} so similar considerations apply. Our model captures the accounting rules for HtM portfolios and thereby provides a link between these rules and bank run risks.

In a recent work, \cite{Granja2023} drew attention to the significant relative increase in banks' reliance on the HtM category during 2021--22, likely as a way of, at least partially, `hiding' potential future unrealized losses in interest rate sensitive assets. For exactly this type of reason, there has been a widespread public debate about the HtM framework and it has become known as a form of `hidden-to-maturity' accounting. Worryingly, the empirical findings of \cite{Granja2023} also suggest that the HtM accounting rules have been more frequently applied by less capitalized banks with significant uninsured deposits, thus prone to runs, presumably to immunize their capital from revaluation of securities held on-balance. Furthermore, \citet{Jiang2023_hedge} argue that reclassification of assets to HtM hides actual interest rate risk that is left unhedged and not adequately recognized by capital figures reported in financial statements. In times of distress, or simply when market expectations change, these issues can be detrimental to those banks' health and, as the March 2023 banking turmoil showed, to financial stability as a whole.

\subsection{Overview and main contributions}

In Section \ref{sec:bs}, we develop a stylized model to help explain bank runs by banks' financial conditions and how they are perceived by depositors. As part of this, we specify some basic assumptions that a bank's balance sheet should satisfy to allow us to derive bank run equilibrium conditions. These conditions are related to a critical level of leverage ratio that depositors consider as financially sound. Moreover, they describe some impact of asset liquidation on their value. 

Subsequently, in Section \ref{sec:clearing}, we define the clearing equilibrium for a run jointly in terms of deposit withdrawals and necessary asset liquidations. Proposition \ref{prop:exist} shows that there are minimal and maximal clearing solutions. Moreover, and important for computational tractability of the models, in Proposition \ref{prop:clearing_algo} we derive an algorithmic way to calculate the equilibrium values for a given bank and confirm that the algorithm converges. In this way, we provide a parsimonious theoretical framework that can serve to pinpoint the key drivers behind run-related instabilities, in particular identifying threshold ratios of banks’ balance sheets that delineate stable financial conditions from those conducive to bank runs. Importantly, these thresholds show whether a bank holds enough liquid assets or whether it needs to tap liquidity from HtM portfolios, whose primary purpose is to keep the securities until maturity or at least for the long term.

In Section \ref{sec:caseofsvb}, we illustrate the mechanics of the model based on two case studies. One is stylized, allowing us to demonstrate, in a controlled environment, how the equilibrium changes with the key parameters of the model: the leverage threshold and the price-impact sensitivity. The second example demonstrates that the model can be calibrated to the balance sheet data of a real bank, namely SVB, whose failure we discussed above.

In Section \ref{sect:AfS-HtM}, we finally formulate a tractable optimization problem as a way to explain how a prudent bank may choose its HtM designation, for a given level of marketable securities, when confronted with funding risk from a market price shock. Proposition \ref{prop:optim} provides a closed-form solution to the optimization problem and, consequently, with a procedure of how to compute the optimal HtM allocation. Thereby, we can address a bank's ability to hold different amounts of HtM securities without having to resort to remarking for reasonable price shocks. 

This methodology could be of relevance both for risk management practices, in particular internal stress tests, and for supervisory analysis through the monitoring of banks' commensurate use of the HtM classification. Although the model is highly stylized, it addresses a pertinent aspect of the HtM accounting rules that even forms part of the regulations as discussed above: namely, that banks should have not only the positive intent but also the ability to hold HtM securities until maturity. In Section \ref{app:beyondSVB}, we illustrate how the optimization can be used to detect financial vulnerabilities of banks related to their balance sheet structures based on the two case studies, i.e., with a stylized balance sheet structure and using calibrations to a time series of SVB's balance sheet data. The former shows how to check whether the level of HtM securities ensures sufficient liquidity to cover assumed shocks to depositors' preferences about the sound leverage ratio and shocks to the value of liquidity portfolios. The latter example enables us to analyze the build-up of vulnerabilities at the bank in the years ahead of its failure.

\subsection{Related literature}

Bank runs associated with risky projects and short-term, flighty funding have been studied extensively. Nevertheless, even with more stringent liquidity regulations and policy intervention frameworks in place, liquidity risk forcefully materializes time and again. Notably, in March 2023, SVB became a textbook example of a bank run, revitalising the discussion on bank-run risk \citep{Vo2023}.

In the classical literature, \cite{Diamond1983} first demonstrated that banks with illiquid assets can be susceptible to self-fulfilling runs, due to coordination failures, even when fundamentals are sound. A significant part of the later literature has been concerned with whether runs, and bank failures more generally, emerge from such sunspot phenomena or if they are driven by poor fundamentals \cite{Goldstein2013}. A key prediction of the global games approach \citep{Morris2003} is that runs occur only at low fundamentals, but they can be self-fulfilling nonetheless (when the fundamentals are not too low). If depositors have noisy observations of some true fundamental state, then there is a threshold value so that the (unique) equilibrium of the coordination game is a run precisely when the fundamentals are below this value \cite{GoldsteinPauzner2005}.

In this work, we are not looking to model why a run occurs. Instead, one of our model parameters is a maximum acceptable threshold for the bank's leverage ratio, and a run is then initiated if this is violated. Our interest lies in devising a tractable framework for computing and analyzing how the bank's balance sheet composition and exposure to fire sales affect the equilibrium outcome of a run.

To resolve a run, we formulate a clearing problem that matches withdrawal requests with cash raised from asset liquidations until the acceptable threshold is restored, if possible. This clearing point of view, based on the bank's balance sheet, is related to works on clearing and contagion in the systemic risk literature \cite{amini2016, veraart2020}. However, we stress that we do not consider financial networks. In a systemic context, \cite{DiamondRajan2005} have established the social inefficiency of runs and the consequent potential for contagion, as runs reduce the overall liquidity. To explore this further, one could formulate a network version of our model, but we do not pursue this herein.

We also stress that our framework incorporates the possibility of fire-sale externalities when a bank is forced to liquidate marketable but illiquid assets. Thus, our work is connected to the growing literature on fire sales as one of the key shock amplification channels when financial institutions need to raise cash by asset sales  \citep{Bichuch2019,Amini2025,Detering2022}.

A significant part of the paper is concerned with the role of the HtM accounting framework. The empirical findings of \cite{Granja2023} and \cite{Kim2023} indicate that neither the intent nor ability to hold HtM portfolios to maturity appear to be primary concerns of banks in general. This is based on datasets for when banks classify or re-classify their HtM holdings. As far as we are aware, there has been no attempt to analyze the reasonableness of HtM holdings within an optimization framework  that incorporates run risk. While intentions are not verifiable, our framework presents a simple tool to analyze and assess what constitutes sound levels of HtM holdings, thereby addressing a given bank's ability---and in turn the credibility of its intent---to hold assets to maturity. This furthermore links to work on the interface between accounting (or reporting) rules and financial stability. For example, \citep{Bischof2021} examine how prudential filters and accounting frameworks (like the amortized cost versus fair value accounting) feed into banks' incentives and thus financial stability. In this regard, we also mention the discussion of deposit freezes and redemption penalties in relation to bank runs in \citet{Altermatt2022}.

Finally, we highlight the recent work of \cite{Granja2024} which appeared while we were completing this paper. Arguably, their work is the closest to our analysis and general approach. The authors build on the bank run model of \citet{Jiang2023} and propose a maximization problem to study banks' designation of HtM assets. We discuss these two works in detail in Section \ref{subsect:recent_work_runs}. Here, we only note that they consider all-or-nothing runs related to insolvency, while we focus on characterizing various run outcomes, with funding and liquidity risk at the center of our model. In particular, our setup allows us to study the impact of fire sales and price-impact sensitivity of liquidated securities. To this end, we assume banks would sell outright securities from their portfolios instead of pledging those as collateral at the central bank or covering deposit outflows with emergency borrowing, likely at much higher cost, as documented by \citet{Cipriani2024} in their empirical work using payments data around SVB collapse. In this way, we can measure how banks can withstand funding shocks on a standalone basis.

\section{Balance sheet construction and model setup}\label{sec:bs}

We begin by specifying the stylized balance sheet of a given bank. The aim is to keep this as simple as possible, while allowing for enough granularity to capture key features of the distinctive roles played by the AfS and HtM designations as well as insured versus uninsured deposits in relation to bank stability. First of all, this will entail two classes of liabilities, i.e., deposits: they can be either insured $L_I \geq 0$ or uninsured $L_U \geq 0$ with the total liabilities given by $L=L_I+L_U$. Uninsured deposits are assumed to be flighty and subject to run risk, while insured depositors have no reason to run. However, the model could also be parameterized to reflect a different classification of liabilities, for instance, stable and unstable funding considered in FINREP reporting, since some of the insured depositors may \emph{`not want to continue to bank with a failing institution'}, see \citet{Cipriani2024}. In this broader perspective, only the split between stable funding ($L_I$) and runnable funding ($L_U$) matters for the model dynamics; for simplicity of calibration herein, we take the insured and uninsured deposits to represent these terms.

Next, we shall assume that the assets of the bank can be one of three types: liquid, illiquid but marketable, or illiquid and nonmarketable. In the case of illiquid but marketable assets, these may be classified as either available-for-sale or held-to-maturity. Beyond this classification, these illiquid but marketable securities are subject to the same market price. In summary, the asset side of the balance sheet will consist of the following four distinct classes:
\begin{enumerate}
\item liquid (cash) assets at mark-to-market value $x \geq 0$;
\item available-for-sale (AfS) illiquid assets $s \geq 0$ with an initial mark-to-market value of $sp$ for some unit price $p > 0$;
\item held-to-maturity (HtM) illiquid assets $h \geq 0$ valued in full (despite being subject to the same market price as the AfS assets); and
\item nonmarketable illiquid assets at book value $\ell \geq 0$.\footnote{Nonmarketable illiquid assets include also the franchise value of the bank. See also Section~\ref{sec:franchise}.}
\end{enumerate}
With the above notation, the total assets are given by $A=x + sp + h + \ell$. The bank's equity $E$ is then the difference between this value and the total liabilities, i.e., $E=A-L$. These quantities determine the a priori composition of the balance sheet before any considerations of a run by depositors. 

\begin{remark}
For now, we take the classification of AfS versus HtM as fixed and given. In Section \ref{sect:AfS-HtM}, we shall address this allocation through an optimization problem.
\end{remark}

Following \cite{banerjee2019b}, we assume that the illiquid, but marketable, holdings are subject to price impacts if they are sold. The mark-to-market value of these assets is given by the inverse demand function $f: [0,s+h] \to [0,p]$ for the initial price $p > 0$.
As these liquidations are performed, the bank faces the volume-weighted average price $ \bar f(\gamma) := \frac{1}{\gamma} \int_0^\gamma f(t)dt$, for $\gamma \in (0,s+h]$, with $\bar{f}(0):=p$. Note that $\bar{f}$ is continuous at $\gamma=0$ if $f$ is continuous there.
In this way, any unsold AfS assets are valued at the price determined by $f$, any sold AfS assets are valued at the price determined by $\bar f$, and any HtM assets are (initially) valued at a fixed price of $1$.
On the other hand, we will assume that the market value of the liquid assets ($x$) and the book value of the nonmarketable assets ($\ell$) remain fixed throughout this study.

\begin{figure}[H]
\centering
\begin{tikzpicture}
\draw[draw=none] (0,6.5) rectangle (6.5,7) node[pos=.5]{\bf A Priori Balance Sheet};
\draw[draw=none] (0,6) rectangle (3.25,6.5) node[pos=.5]{\bf Assets};
\draw[draw=none] (3.25,6) rectangle (6.5,6.5) node[pos=.5]{\bf Liabilities};

\filldraw[fill=darkseagreen,draw=black] (0,5.3) rectangle (3.25,6) node[pos=.5,style={align=center}]{Liquid $x$}; 
\filldraw[fill=carolinablue,draw=black] (0,2.5) rectangle (3.25,5.3) node[pos=.5,style={align=center}]{Available for Sale \\ $sp$};
\filldraw[fill=columbiablue,draw=black] (0,1.25) rectangle (3.25,2.5) node[pos=.5,style={align=center}]{Held to Maturity \\ $h$};
\filldraw[fill=coralpink,draw=black] (0,0) rectangle (3.25,1.25) node[pos=.5,style={align=center}]{Nonmarketable \\ $\ell$};

\filldraw[fill=coral,draw=black] (3.25,3) rectangle (6.5,6) node[pos=.5,style={align=center}]{Insured Deposits \\ $L_I$};
\filldraw[fill=darkchampagne,draw=black] (3.25,1.5) rectangle (6.5,3) node[pos=.5,style={align=center}]{Uninsured Deposits \\ $L_U$};
\filldraw[fill=deepchampagne,draw=black] (3.25,0) rectangle (6.5,1.5) node[pos=.5,style={align=center}]{Equity};

\draw[->,line width=1mm] (7,3) -- (8,3);

\draw[draw=none] (8.5,6.5) rectangle (15,7) node[pos=.5]{\bf Realized Balance Sheet};
\draw[draw=none] (8.5,6) rectangle (11.75,6.5) node[pos=.5]{\bf Assets};
\draw[draw=none] (11.75,6) rectangle (15,6.5) node[pos=.5]{\bf Liabilities};

\filldraw[fill=darkseagreen,draw=black] (8.5,4.8889) rectangle (11.75,6) node[pos=.5,style={align=center}]{Liquid \\ $x+\gamma \bar f(\gamma)$};
\filldraw[fill=carolinablue,draw=black] (8.5,2.6444) rectangle (11.75,4.8833) node[pos=.5,style={align=center}]{Available for Sale \\ $(s-\gamma) f(\gamma)$};
\filldraw[fill=columbiablue,draw=black] (8.5,1.3944) rectangle (11.75,2.6444) node[pos=.5,style={align=center}]{Held to Maturity \\ $h$};
\filldraw[fill=coralpink,draw=black] (8.5,0.1444) rectangle (11.75,1.3944) node[pos=.5,style={align=center}]{Nonmarketable \\ $\ell$};
\filldraw[fill=carolinablue,draw=black] (8.5,0) rectangle (11.75,0.1444);

\filldraw[fill=coral,draw=black] (11.75,3) rectangle (15,6) node[pos=.5,style={align=center}]{Insured Deposits \\ $L_I$};
\filldraw[fill=darkchampagne,draw=black] (11.75,2.6111) rectangle (15,3) node[pos=.5,style={align=center}]{Uninsured Deposits};
\filldraw[fill=darkchampagne,draw=black] (11.75,1.5) rectangle (15,2.6111) node[pos=.5,style={align=center}]{Withdrawals \\ $w$};
\filldraw[fill=deepchampagne,draw=black] (11.75,0.1444) rectangle (15,1.5) node[pos=.5,style={align=center}]{Equity};
\filldraw[fill=deepchampagne,draw=black] (11.75,0) rectangle (15,0.1444);
\draw (8.5,0) rectangle (15,6);
\draw (11.75,0) -- (11.75,6);

\begin{scope}
    \clip (8.5,0) rectangle (15,0.1389);
    \foreach \x in {-8.5,-8,...,15}
    {
        \draw[line width=.5mm] (8.5+\x,0) -- (15+\x,6);
    }
\end{scope}
\begin{scope}
    \clip (8.5,4.8889) rectangle (11.75,6);
    \foreach \x in {-8.5,-8,...,15}
    {
        \draw[line width=.5mm,color=gray] (8.5+\x,0) -- (15+\x,6);
    }
\end{scope}
\begin{scope}
    \clip (11.75,1.5) rectangle (15,2.6111);
    \foreach \x in {-8.5,-8,...,15}
    {
        \draw[line width=.5mm,color=gray] (8.5+\x,0) -- (15+\x,6);
    }
\end{scope}
\end{tikzpicture}
\caption{Stylized balance sheet before and after a run given by an amount $w$ of deposit withdrawals. In this case, a quantity $\gamma$ of AfS assets had to be sold, while the HtM assets were left untouched.}
\label{fig:balance_sheet}
\end{figure}

The initial composition of the balance sheet and an example of the outcome after a run are illustrated in Figure~\ref{fig:balance_sheet}. A depositor run is resolved as follows. Based on the observed balance sheet, the \emph{uninsured} depositors will withdraw funds depending on whether or not the leverage ratio, given by assets over equity, is in line with some maximum acceptable threshold. If violated, withdrawals take place until the realized leverage ratio is back in line with the threshold. We denote by $w \in [0,L_U]$ the total withdrawals that result from this and we let $\gamma$ denote the corresponding quantity sold of the marketable assets.

To be precise, the uninsured investors have a maximum leverage ratio $\lm>1$ that they are willing to accept. If the actual leverage ratio is above this value, a run is initiated. The resulting equilibrium outcome $(w^*,\gamma^*)$ is one such that the withdrawal requests $w^*$ bring the actual (or realized) leverage ratio $\lambda=\lambda(w^*,\gamma^*)$ back in line with $\lm$ given the new mark-to-market values implied by the quantity $\gamma^*$ sold to cover withdrawals. Here the actual leverage ratio $\lambda=\lambda(w,\gamma)$ is defined as the ratio of assets over equity when accounting for withdrawals and selling as well as the corresponding losses that must be recognized on the balance sheet. This means that, for their decision to withdraw or not, depositors look at the realized mark-to-market values of all marketable assets (in determining the leverage ratio), while factoring in that HtM assets need only be counted at market values if withdrawals are large enough that part or all of them will need to be sold. In the case that HtM assets are remarked to be sold, we impose a dead-weight loss, proportional to the size of the assets, on the bank. Naturally, our model is highly stylized in that the resulting equilibrium is solved for in a single step, but one should have in mind continuously occurring withdrawals up until the equilibrium is reached. Finally, we recall that \emph{insured} depositors are assumed to leave their funds at the bank even in a stress scenario---in particular, a bank run---due to the guarantee of recovery in case of a failure. 

Even though we focus on how a bank run is resolved and we do not endogenize the signals for a bank run to start, we provide a motivation for how it can get in motion. Once the bank's fundamentals, simplistically measured by leverage ratio, are perceived to indicate material deterioration of this bank's financial soundness, impatient uninsured depositors would start withdrawing. They would observe the bank's ability to raise cash and how the financial indicators would evolve, esp., to assess the bank's loss absorption capacity on a going concern. They could stop the run if the depositors do not overreact and bank's solvency conditions meet their risk tolerance, in which case it could be a partial run.

We use the leverage ratio as the decision variable for depositors, since it allows for a tractable analysis and since it is one of the key financial indicators to determine bank stability, allowing investors and depositors to track changes in solvency risk, esp.\ useful in case of uncertainty of asset valuation \citep{Dermine2015}.\footnote{It is one of several key financial stress indicators \citep{DUCA2013}.} In particular, it has been used as an important variable in seminal bank run models \citep{Gertler2015}. Furthermore, the leverage ratio is one of two key solvency indicators that is regulated by capital standards, most importantly by Basel III regulation introduced after the Global Financial Crisis. Banks should keep it above the regulatory minimum and typically retain a voluntary buffer, so as to minimize the risk that the leverage ratio falls below requirements.\footnote{See \url{https://www.bis.org/basel\_framework/standard/LEV.htm}} Naturally, investors could also look at other indicators, related to the funding and liquidity position of the banks, however, those are more difficult to track, as they are reported with a considerable lag, and the investors may want to react to more representative signals coming from the solvency angle. In this regard, see also Remark \ref{rem:barr} below. Moreover, we want to keep the withdrawal decisions anchored in our stylized balance sheet description of the bank. For these reasons, we focus on the leverage ratio as the sole signal tracked by depositors.

\begin{remark}\label{rem:barr}
While bank runs are typically associated with liquidity issues, we take the view of Michael S.\ Barr, Federal Reserve Board Vice Chair for Supervision, that `\emph{while the proximate cause of SVB’s failure was a liquidity run, the underlying issue was concern about its solvency}.'\footnote{See page 2 of \url{https://www.federalreserve.gov/publications/files/svb-review-20230428.pdf}} 
In this way, solvency concerns of uninsured depositors can manifest as liquidity problems for the bank. In our model, the maximum acceptable leverage ratio may be viewed as the threshold for when worries about solvency kick in. As documented by \citet{Fascione2024}, though there are many drivers of depositors' decisions to move their deposits, depositor outlook on solvency is one of the key concerns.
\end{remark}

Whilst our model is static, it can eventually be applied for bank balance sheets at multiple points in time. Thus, it is worth mentioning that the acceptable maximum leverage ratio $\lm$ could in principle be changing in time:~it may be a function of changing macro-financial environments or changing risk tolerances of the depositors. For example, risk aversion might have changed in 2020 due to wealth effects (or expected wealth effects) implied by COVID-19 crisis (see drivers of risk aversion studied by \citet{Guiso2018}), while in 2021-2022 it might have been more stable given no new shocks with magnitudes comparable to the pandemic. One could then think of sudden drops in $\lm$ as the reason for a run, spurred by changing depositor sentiments, but we shall not pursue such an angle here. Rather, we will view $\lm$ as a given, albeit unobservable, characteristic of depositors that remains fixed as the balance sheet evolves over some period of time of interest. Thereby, changes in run risk are explained by observable changes to the bank's balance sheet composition for given characteristics of the depositors. In particular, when we perform quarterly simulations based on a time series of SVB's balance sheet data in Section \ref{sec:caseofsvb}, we assume that $\lm$ stays constant throughout. That way, we can study vulnerabilities of SVB exclusively stemming from its evolving balance sheet structure. Naturally, this will involve inferring plausible levels for $\lm$, and we will consider a range of such values for some of the simulations. In Section \ref{app:beyondSVB}, we also illustrate how to apply the model outcomes in relative terms comparing vulnerabilities across banks which require considering different potential levels of acceptable leverage ratios.

\begin{assumption}\label{ass:idf}
The inverse demand function $f: [0,s+h] \to (0,p]$ is non-increasing with initial price $f(0) = p$, where $p\in(0,1]$.
\end{assumption}

\begin{assumption}\label{ass:LU}
We assume $L_U > 0$ as no withdrawals would occur otherwise.
\end{assumption}

For modeling purposes, we stress that the quantities $L_I$, $L_U$, and $L$ remain fixed, as they capture the given liabilities of the initial balance sheet, before a run. The uninsured liabilities after withdrawals are then given by $L_U-w$. Since nothing is withdrawn from the insured liabilities $L_I$, the total liabilities thus become $L-w$. Writing $A=A(w,\gamma)$ for the total assets (with recognized losses) as a function of the withdrawals $w$ and the quantity of marketable securities sold $\gamma$, we can express the actual (realized) leverage ratio $\lambda=\lambda(w,\gamma)$ as
\begin{equation}\label{eq:lev_ratio}
\lambda = \frac{A}{E}= \frac{A(w,\gamma)}{A(w,\gamma)-(L-w)},
\end{equation}
where
\[
A(w,\gamma)= x + \gamma \bar f(\gamma) + ([s-\gamma]f(\gamma) + h)\ind{\gamma \leq s} + [s+\alpha h-\gamma]f(\gamma)\ind{\gamma > s} + \ell - w,
\]
for the given values of $x$, $s$, $h$, and $\ell$ and where $\alpha \in (0,1]$ provides a cost to remarking, i.e., $(1-\alpha)h$ are dead-weight losses from the act of converting from HtM to AfS.

If the withdrawal requests are larger than the bank's liquid holdings, then it will need to raise cash by selling its illiquid, but marketable, asset holdings. Specifically, the bank must sell $\gamma \in [0,s+\alpha h]$ so that $x + \gamma \bar f(\gamma) \geq w$, if possible.\footnote{We impose a no short selling constraint so that $\gamma \leq s+\alpha h$ throughout.} Based on the given balance sheet, if $\gamma \leq s$ then we assume that all liquidated assets were AfS assets. On the other hand, if $\gamma > s$ then the bank liquidates all AfS assets and must be liquidating HtM assets as well. Formally, if any HtM assets are to be liquidated, then that entire block of assets is immediately recognized as AfS and marked accordingly with the associated dead-weight losses applied. Summarizing these two cases:
\begin{enumerate} 
\item if $\gamma \leq s$ then the bank holds $x + \gamma \bar f(\gamma)$ in liquid assets, $(s-\gamma)f(\gamma)$ in AfS assets, and $h$ in HtM assets;
\item if $\gamma > s$ then the bank holds $x + \gamma \bar f(\gamma)$ in liquid assets, $(s+\alpha h-\gamma)f(\gamma)$ in AfS assets, and $0$ HtM assets.
\end{enumerate}

However, in satisfying the withdrawals, the bank may fail due to having insufficient liquidity or insufficient equity.  We call these cases illiquidity and insolvency respectively. 
\begin{enumerate}
\item \textbf{Illiquidity}: The bank cannot meet withdrawals: $w \geq x + [s+\alpha h] \bar f(s+\alpha h)$ or, equivalently, $\gamma = s+\alpha h$.
\item \textbf{Insolvency}: The bank has negative equity: \[x + \gamma \bar f(\gamma) + ([s-\gamma]f(\gamma) + h)\ind{\gamma \leq s} + [s+\alpha h-\gamma]f(\gamma)\ind{\gamma > s} + \ell \leq L.\] 
\end{enumerate}
\begin{remark}
If there are no price impacts on the illiquid asset, i.e., $f \equiv p$, then insolvency can only occur at the moment that the HtM assets are re-marked as AfS assets.
\end{remark}

For our analysis, we shall need a final assumption on the behavior of the realized balance sheet.  Recall that $\lm>1$. As a function of the quantity sold, we require that the rate of increase in the realized value of the (total) assets sold is always larger, by a factor of $\frac{\lm -1}{\lm}=1-\frac{1}{\lm}>0$, than the corresponding rate of decrease in the market value of the remaining unsold assets. More precisely, we impose the following technical assumption on the inverse demand function $f$.

\begin{assumption}\label{ass:idf-2}
For the remainder of this work, we will assume that the mapping $\gamma \in [0,s+\alpha h] \mapsto \gamma \bar f(\gamma) + (1 - \frac{1}{\lm})(s+\alpha h-\gamma)f(\gamma)$ is strictly increasing.
\end{assumption}

\begin{lemma}\label{lem:inverse_deman_assump} Suppose the inverse demand function $f$ is differentiable on $(0,s+\alpha h)$. Then Assumption~\ref{ass:idf-2} holds provided the differential inequality
\[ f(\gamma) > (1-\lm)(s+\alpha h-\gamma) f^\prime(\gamma)
\]
is satisfied, for all $\gamma \in [0,s+\alpha h]$.
\end{lemma}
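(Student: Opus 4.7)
The plan is to show that the map $g(\gamma) := \gamma \bar f(\gamma) + \bigl(1 - \tfrac{1}{\lm}\bigr)(s+h-\gamma)f(\gamma)$ is differentiable on $(0, s+h)$ with strictly positive derivative, which gives the claimed strict monotonicity on the interior; the endpoints $\gamma = 0$ and $\gamma = s+h$ can then be incorporated by continuity, since $\bar f$ is continuous at $0$ by convention and $f$ is assumed to be defined (and finite) on the closed interval.

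The key observation is that, by the definition $\bar f(\gamma) = \frac{1}{\gamma}\int_0^\gamma f(t)\,dt$, we have $\gamma \bar f(\gamma) = \int_0^\gamma f(t)\,dt$, so the fundamental theorem of calculus gives $\frac{d}{d\gamma}[\gamma \bar f(\gamma)] = f(\gamma)$ wherever $f$ is continuous (which is implied by differentiability on $(0,s+h)$). Differentiating the second term using the product rule yields $\frac{d}{d\gamma}[(s+h-\gamma)f(\gamma)] = -f(\gamma) + (s+h-\gamma)f'(\gamma)$. Combining, I would obtain
\begin{equation*}
g'(\gamma) \;=\; f(\gamma) + \bigl(1 - \tfrac{1}{\lm}\bigr)\bigl[-f(\gamma) + (s+h-\gamma)f'(\gamma)\bigr]
\;=\; \tfrac{1}{\lm}\bigl[\,f(\gamma) - (1-\lm)(s+h-\gamma)f'(\gamma)\bigr].
\end{equation*}

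Since $\lm > 1 > 0$, strict positivity of $g'$ on $(0, s+h)$ is equivalent to the bracketed term being strictly positive, which is precisely the hypothesized differential inequality $f(\gamma) > (1-\lm)(s+h-\gamma)f'(\gamma)$. Thus $g' > 0$ on $(0, s+h)$, so $g$ is strictly increasing on the open interval. Finally, continuity of $g$ on $[0, s+h]$ (using $\bar f(0) := p$) extends the strict monotonicity to the closed interval, verifying Assumption~\ref{ass:idf-2}.

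There is no real obstacle here: the claim is essentially a one-line calculus exercise once $\gamma \bar f(\gamma)$ is recognized as the antiderivative $\int_0^\gamma f(t)\,dt$. The only minor subtlety is being careful that the hypothesis is stated on the closed interval $[0, s+h]$ while $f$ is only assumed differentiable on the open interval, but this does not affect the conclusion, since strict monotonicity follows already from $g' > 0$ on $(0,s+h)$ together with continuity at the endpoints.
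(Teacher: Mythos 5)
Your proposal is correct and follows essentially the same route as the paper: both reduce the claim to showing that $\gamma \mapsto \gamma \bar f(\gamma) + (1-\tfrac{1}{\lm})(s+h-\gamma)f(\gamma)$ has strictly positive derivative, using $\gamma\bar f(\gamma)=\int_0^\gamma f(t)\,dt$ and the product rule to identify the derivative as $\tfrac{1}{\lm}\bigl[f(\gamma)-(1-\lm)(s+h-\gamma)f'(\gamma)\bigr]$. You merely spell out the computation and the endpoint continuity step that the paper leaves implicit.
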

\begin{proof} 
It suffices to check that $\gamma \mapsto \gamma \bar f(\gamma) + (1 - \frac{1}{\lm})(s+\alpha h-\gamma)f(\gamma)$ has a strictly positive derivative on $(0,s+\alpha h)$. Using the definition of $\bar{f}$, differentiating, and reorganizing the terms, we see that this is equivalent to the stated differential equality.
\end{proof}

\begin{remark}\label{rem:increas_non-increase}
Under Assumption \ref{ass:idf-2}, we get that  $\gamma \mapsto \gamma \bar f(\gamma) + (1-\frac{1}{\lm})(\bar s - \gamma)f(\gamma)$ is strictly increasing on $ [0,\bar s]$, for any $\bar s \in (0,s+\alpha h]$. In particular, this holds for $\bar s \in \{s, s+\alpha h \}$ which we shall make use of in Section \ref{sec:clearing}. At the same time, we stress that the map $\gamma \mapsto \gamma \bar f(\gamma) + (\bar s - \gamma)f(\gamma)$ is instead non-increasing on $ [0,\bar s]$, for any $\bar s \in (0,s+\alpha h]$, as one can readily verify by, e.g.,~arguing as in Lemma \ref{lem:inverse_deman_assump}.
\end{remark}

We conclude this section by highlighting two common examples of inverse demand functions and outlining the parameter choices for which our assumptions are satisfied.

\begin{example}
Take $f(\gamma) := p(1 - b\gamma)$ as in, e.g.,~\cite{Greenwood2015}. Then $\bar f(\gamma) = p(1 - \frac{b}{2}\gamma)$. Naturally, $b \leq 1/(s+\alpha h)$ and hence
Assumptions~\ref{ass:idf} is satisfied. By Lemma \ref{lem:inverse_deman_assump}, Assumption \ref{ass:idf-2} holds if and only if either $b < 1/(s+\alpha h)$ for $\lm\in (1,2)$ or $b<1/[(\lm-1)(s+\alpha h)]$ for $\lm\geq 2$.
\end{example}
\begin{example}
Take $f(\gamma) = p\exp(-b\gamma)$ as in, e.g.,~\cite{Cifuentes2005}. Then $\bar f(\gamma) = \frac{p(1 - \exp(-b\gamma))}{b\gamma}$ for $\gamma > 0$ and $\bar f(0) = p$. Naturally, $b\geq 0$, so Assumption~\ref{ass:idf} holds. By Lemma \ref{lem:inverse_deman_assump},
Assumption~\ref{ass:idf-2} holds if and only if $b < 1/ [(\lm-1)(s+\alpha h)]$.
\end{example}

\section{The clearing problem for a depositor run}\label{sec:clearing}

In the previous section, we laid out the granularity of the balance sheet and introduced the mechanisms behind our bank run model. The goal of the present section is to first (i) formalize this as a precise clearing problem, then (ii) establish the existence of equilibrium solutions to this clearing problem, and finally (iii) provide a tractable algorithm for computing the equilibrium values. Throughout, we are working with the notation and under assumptions presented in Section~\ref{sec:bs}. 

Our model for a (potential) bank run may be expressed as the solution to a clearing problem that is jointly in the equilibrium amount of withdrawals $w^*$ and the equilibrium quantity sold $\gamma^*$ out of the marketable securities. This may be formalized as the search for fixed points of the mapping $\Phi: [0,L_U] \times [0,s+\alpha h] \to [0,L_U] \times [0,s+\alpha h]$ defined by $\Phi=(\Phi_w,\Phi_\gamma)$, where
\begin{align}
\Phi_w(\gamma^*) &= L_U \wedge \Bigl[\lm L - (\lm-1)(x + \gamma^* \bar f(\gamma^*) + ([s-\gamma]f(\gamma) + h)\ind{\gamma \leq s} \label{eq:map_w}\\
 &\qquad + [s+\alpha h-\gamma]f(\gamma)\ind{\gamma > s} + \ell)\Bigr]^+ \nonumber\\
\Phi_\gamma(w^*,\gamma^*) &= [s+\alpha h] \wedge \frac{(w^*-x)^+}{\bar f(\gamma^*)} \label{eq:map_gamma}.
\end{align}
For a given quantity sold, \eqref{eq:map_w} returns the withdrawals required for the depositors to enforce their maximum acceptable leverage ratio. Given also the withdrawals, \eqref{eq:map_gamma} then ensures that the proceeds from the quantity sold match the withdrawal requests. A pair $(w^*,\gamma^*)\in [0,L_U]\times [0,s+\alpha  h]$ is therefore a clearing solution if and only if it is a fixed point of $\Phi$, meaning that we have
\begin{equation}\label{eq:fixed_point}
(w^*,\gamma^*) = \Phi(w^*,\gamma^*) = (\Phi_w(\gamma^*) \; , \; \Phi_\gamma(w^*,\gamma^*)),
\end{equation}
provided also that the bank is solvent in this case, i.e., provided
\begin{equation}\label{eq:solvent}
x + \gamma^* \bar f(\gamma^*) + ([s-\gamma]f(\gamma) + h)\ind{\gamma \leq s} + [s+\alpha h-\gamma]f(\gamma)\ind{\gamma > s} + \ell > L. 
\end{equation}

If $(w^*,\gamma^*)$ satisfies \eqref{eq:fixed_point}, but violates \eqref{eq:solvent}, then the bank is insolvent. In that case, the values $(w^*,\gamma^*)$ correspond to the run having occurred and the bank only subsequently being declared insolvent. This is arguably more in line with the timeline of events in an actual run, but one can of course also look for the amount of liquidations $\gamma$ that first induces technical insolvency by violating \eqref{eq:solvent} during the run.

 For clearing solutions corresponding to a run (i.e., $w^*> x$) without causing illiquidity (i.e., $\gamma^* < s+\alpha  h$), we have $w^* = x + \gamma^* \bar{f}(\gamma^*)$ with all withdrawals being met, solvency issues aside. On the other hand, illiquidity corresponds to clearing solutions with a quantity sold $\gamma^*=s+\alpha h$ and withdrawals $w^*\geq x + (s+\alpha h) \bar{f}(s+\alpha h)$. When a bank is left illiquid, the value of $w^*$ reflects the withdrawal requests and not the actualized withdrawals, as the bank would generally not be able to cover all requests.

\begin{proposition}[Minimal and maximal clearing solutions]\label{prop:exist}
Consider the partial order of component-wise inequality. Then, there exist minimal and maximal clearing solutions $ (w^\downarrow,\gamma^\downarrow) \leq (w^\uparrow,\gamma^\uparrow)$.
\end{proposition}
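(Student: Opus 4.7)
The plan is to apply Tarski's fixed point theorem on the complete lattice $[0,L_U]\times[0,s+h]$ equipped with the componentwise order; since this product of compact intervals of $\mathbb{R}$ is a complete lattice, any monotone self-map has a non-empty set of fixed points that itself forms a complete lattice, and therefore possesses both a least and a greatest element. Accordingly, the whole task reduces to verifying that $\Phi=(\Phi_w,\Phi_\gamma)$ is componentwise non-decreasing on its domain.

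The easy direction is the monotonicity of $\Phi_\gamma$. By Assumption \ref{ass:idf}, the inverse demand $f$ is non-increasing, so its running average $\bar f$ is non-increasing as well; hence $\gamma^*\mapsto 1/\bar f(\gamma^*)$ is non-decreasing. Combined with the obvious monotonicity of $w^*\mapsto(w^*-x)^+$ and the fact that capping by the constant $s+h$ preserves monotonicity, this gives that $\Phi_\gamma$ is non-decreasing in each argument.

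The main step, and the one I expect to be the principal obstacle, is to show that $\Phi_w$ is non-decreasing in $\gamma^*$. Because $L_U\wedge[\,\cdot\,]^+$ is non-decreasing and the scaling factor $-(\lm-1)$ is negative, it suffices to prove that the realised total asset value
\[
A(\gamma)=x+\gamma\bar f(\gamma)+[s-\gamma]^+f(\gamma)+[h-(\gamma-s)^+]\bigl(\ind{\gamma\leq s}+f(\gamma)\ind{\gamma>s}\bigr)+\ell
\]
is non-increasing on $[0,s+h]$. I would handle this piecewise. On $[0,s]$, the HtM block contributes the constant $h$ and the marketable part simplifies to $\gamma\bar f(\gamma)+(s-\gamma)f(\gamma)$, which is non-increasing by Remark \ref{rem:increas_non-increase} with $\bar s=s$. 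On $(s,s+h]$, the HtM block is re-marked and the marketable contribution becomes $\gamma\bar f(\gamma)+(s+h-\gamma)f(\gamma)$, again non-increasing by Remark \ref{rem:increas_non-increase} with $\bar s=s+h$. The delicate step is the interface $\gamma=s$: the left limit equals $x+s\bar f(s)+h+\ell$, while the value just to the right of $s$ equals $x+s\bar f(s)+hf(s)+\ell$, so the jump $-h(1-f(s))\leq 0$ is in the correct, non-increasing direction thanks to $f\leq p\leq 1$ from Assumption \ref{ass:idf}. Combining monotonicity on each piece with this inequality at the interface yields the required monotonicity on all of $[0,s+h]$.

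Putting everything together, $\Phi$ is a monotone self-map of the complete lattice $[0,L_U]\times[0,s+h]$, and Tarski's theorem delivers fixed points $(w^\downarrow,\gamma^\downarrow)\leq(w^\uparrow,\gamma^\uparrow)$ that are minimal and maximal in the componentwise order, which by definition are the sought clearing solutions. The only subtle ingredient in the argument is the jump at $\gamma=s$ arising from HtM re-marking; the bound $p\leq 1$ is precisely what ensures this jump points in the direction needed for the monotonicity argument to close.
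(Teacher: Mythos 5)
Your proof is correct and follows exactly the paper's route: the paper also invokes Tarski's fixed point theorem on the complete lattice $[0,L_U]\times[0,s+h]$ after asserting (without detail) that $\Phi$ is componentwise non-decreasing. Your piecewise verification of the monotonicity of the realised asset value via Remark \ref{rem:increas_non-increase} and the sign of the jump $-h(1-f(s))\leq 0$ at $\gamma=s$ simply fills in what the paper leaves as ``one readily confirms.''
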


Throughout, we work with the minimal solution, as this is the best case for the bank and represents the solution that a run would most naturally arrive at, without unnecessarily pessimistic views on what an outcome might look like. The next result provides an easy-to-use algorithm (Algorithm \ref{alg:clearing}) for finding the minimal clearing solution. When the algorithm terminates, one must additionally confirm whether or nor the candidate clearing solution leaves the bank solvent.

\begin{proposition}[Step-by-step algorithm]\label{prop:clearing_algo}
The minimal clearing solution $(w^\downarrow,\gamma^\downarrow)$ may be obtained by proceeding stepwise through Algorithm \ref{alg:clearing} until a conditional statement is true. If the resulting pair $(w^\downarrow,\gamma^\downarrow)$ satisfies \eqref{eq:solvent}, then the bank is solvent. If instead \eqref{eq:solvent} is violated, and the algorithm terminated before Step 4, then the bank is liquid but insolvent. If \eqref{eq:solvent} is violated and the algorithm terminated in Step 4, then the bank is both illiquid and insolvent.
\end{proposition}

The reasoning behind this result is as follows. We can consider fictitious runs in which we assume there is either (1) no selling of AfS assets, (2) some selling of AfS assets, but no HtM assets are sold, (3) HtM assets are sold, but enough to satisfy withdrawals, or (4) all HtM assets are sold and the bank is left illiquid. These are precisely the four steps of Algorithm \ref{alg:clearing} with the proof of Proposition \ref{prop:clearing_algo} providing the given characterizations. Now, we know from Proposition \ref{prop:exist} that there is a minimal clearing solution. Therefore, as the steps we have outlined are in increasing severity, the first one to yield a clearing solution must be the minimal such solution.

\definecolor{state1color}{RGB}{0, 150, 0}    
\definecolor{state2color}{RGB}{255, 204, 0}  
\definecolor{state3color}{RGB}{255, 0, 0}    
\definecolor{state4color}{RGB}{0, 0, 0}      

\begin{algorithm}
	\caption{Clearing Algorithm}\label{alg:clearing}
	\begin{algorithmic}[1]
		\State \textbf{No sales} \hfill \textcolor{state1color}{\rule[0.5ex]{2em}{3pt}}
		\vspace{0.2em}
		\Statex \hspace{1em} \textbf{If} $L_U \leq x$ or $\lm L - (\lm-1)(x+sp+h+\ell) \leq x$, \textbf{then}
		$\gamma=0$ and  $w=L_U \wedge [\lm L - (\lm-1)(x+sp+h+\ell)]^+$.
		
		\vspace{0.5em}
		\State \textbf{Run without HtM re-marking} \hfill \textcolor{state2color}{\rule[0.5ex]{2em}{3pt}}
		\vspace{0.2em}
		\Statex \hspace{1em} \text{(i)} \emph{Partial withdrawals}: \textbf{if}\vspace{-1pt}
		\begin{align*}
			&L-x-(1-\tfrac{1}{\lm})(h+\ell) \in [(1-\tfrac{1}{\lm})sp , s\bar f(s)], \quad \text{and} \\
			&L_U \geq \lm L - (\lm-1)(x+\gamma^*\bar f(\gamma^*) + (s-\gamma^*)f(\gamma^*)+h+\ell),\quad \text{for}\\
			&\gamma^* \bar f(\gamma^*) + (1-\tfrac{1}{\lm})(s-\gamma^*)f(\gamma^*) = L - x - (1-\tfrac{1}{\lm})(h+\ell), \quad \gamma^*\in[0,s],
		\end{align*}\vspace{-1pt}
		\hspace{1em} \textbf{then} $\gamma = \gamma^*$ and $w = x+ \gamma^*\bar{f}(\gamma^*) \in (x,L_U)$.
		
		\vspace{0.3em}
		\Statex \hspace{1em} \text{(ii)} \emph{Full withdrawals}: \textbf{if}
		$L_U \in (x , x+s\bar f(s)]$ and
		$L_I \geq (1 - \frac{1}{\lm})[(s-\gamma^*)f(\gamma^*) + h + \ell]$ for $\gamma^*\in[0,s]$ solving $\gamma^* \bar f(\gamma^*) = L_U - x$, \textbf{then}
		$\gamma = \gamma^*$ and $w = L_U$.
		
		\vspace{0.5em}
		\State \textbf{Run with HtM re-marking} \hfill \textcolor{state3color}{\rule[0.5ex]{2em}{3pt}}
		\vspace{0.2em}
		\Statex \hspace{1em} \text{(i)} \emph{Partial withdrawals}: \textbf{if}\vspace{-1pt}
		\begin{align*}
			&L-x-(1-\tfrac{1}{\lm})\ell \in [s\bar f(s) + (1-\tfrac{1}{\lm})\alpha hf(s) , (s+\alpha h)\bar f(s+\alpha h)],\quad \text{and} \\
			& L_U \geq \lm L - (\lm-1)(x+\gamma^*\bar f(\gamma^*) + (s+\alpha h-\gamma^*)f(\gamma^*) + \ell),\quad \text{for} \\
			&\gamma^* \bar f(\gamma^*) + (1-\tfrac{1}{\lm})(s+\alpha h-\gamma^*)f(\gamma^*) = L-x-(1-\tfrac{1}{\lm})\ell,\quad \gamma^*\in[s,s+\alpha h],
		\end{align*}\vspace{-1pt}
		\hspace{1em} \textbf{then} $\gamma = \gamma^*$ and $w = x + \gamma^*\bar f(\gamma^*)\in(x,L_U)$.
		
		\vspace{0.2em}
		\Statex \hspace{1em} \text{(ii)} \emph{Full withdrawals}: \textbf{if} $L_U \in (x , x+(s+\alpha h)\bar f(s+\alpha h)]$ and
		$L_I \geq (1 - \frac{1}{\lm})[(s+\alpha h-\gamma^*)f(\gamma^*)+\ell]$ for $\gamma^*\in[s,s+\alpha h]$ solving $\gamma^* \bar f(\gamma^*) = L_U - x$, \textbf{then} $\gamma = \gamma^*$ and $w = L_U$.
		
		\vspace{0.5em}
		\State \textbf{Illiquidity} \hfill \textcolor{state4color}{\rule[0.5ex]{2em}{3pt}}
		\vspace{0.2em}
		\Statex \hspace{1em} \text{(i)} \emph{Partial withdrawals}:  \textbf{if} \vspace{-4pt}
		$$\lm L - (\lm - 1)(x + (s+\alpha h)\bar f(s+\alpha h) + \ell) < L_U\;\; \text{and} \;\; L \geq x + (s+\alpha h)\bar f(s+\alpha h) + (1 - \tfrac{1}{\lm})\ell, \vspace{-4pt}$$
		\textbf{then} $\gamma = s+\alpha h$ and $w = \lm L - (\lm - 1)(x + (s+\alpha h)\bar f(s+\alpha h) + \ell)\in(x,L_U)$. 
		
		\vspace{0.3em}
		\Statex \hspace{1em} \text{(ii)} \emph{Full withdrawals}: \textbf{if} \vspace{-4pt}
		$$	\lm L - (\lm - 1)(x + (s+\alpha h)\bar f(s+\alpha h) + \ell) \geq L_U\;\; \text{and}\;\; L_U -x \geq (s+\alpha h)\bar f(s+\alpha h) \vspace{-4pt}$$
		\textbf{then} $\gamma = s+\alpha h$ and $w =L_U$.
	\end{algorithmic}
\end{algorithm}

Intuitively, Algorithm \ref{alg:clearing} describes the different `shades of liquidity' of a bank facing funding shocks while locked in with a given balance sheet of HtM securities, and it provides a simple step-by-step way to compute the equilibrium of deposit withdrawals and liquidation of securities. The shades refer to the four steps: from the most sound balance sheet to the weakest one that threatens with a default on illiquidity grounds. The first step is a `no sale' liquidity zone. It means that any deposit withdrawal shock can be covered with cash holdings. In the formulas of the proposition, we rewrite the constraints to have a leverage ratio more straightforwardly comparable with the maximum acceptable level. In this way, we can see that the more cash $x$ is held, the larger the distance to the maximum acceptable leverage. In the second step, withdrawals are at the maximum level of $L_U$, provided that $L_U$ exceeds cash $x$ and that the insured deposits surpass the AfS portfolio that can be cashed in together with the HtM portfolio and the other assets $\ell$. Subcases (i) and (ii) differ only in whether the withdrawals hit their maximum possible level $L_U$. In step 3, the bank must enter HtM, which triggers a remarking of the entire HtM portfolio. Finally, the last step 4 means that the bank has entered an illiquid situation, with insufficient resources to cover funding withdrawals. In all steps, the key driving forces are the maximum acceptable leverage, the hit from HtM remarking, and the price impact function that would determine how much cash would eventually be raised from the liquidated assets.

\begin{remark}\label{rem:all_solutions}
Though Algorithm~\ref{alg:clearing} presents a sequential procedure to compute the minimal clearing solution $(w^\downarrow,\gamma^\downarrow)$, the logic therein can be used to find all feasible clearing solutions $(w^*,\gamma^*)$. Specifically, rather than sequentially searching for a fixed point, every case within Algorithm~\ref{alg:clearing} is independently checked, i.e., the relevant condition is satisfied and that case admits the relevant clearing solution or the condition fails and no solution exists in that case. In particular, by running Algorithm~\ref{alg:clearing} backwards, the maximal clearing solution $(w^\uparrow,\gamma^\uparrow)$ can be found via an analogous proof to that of Proposition~\ref{prop:clearing_algo}. 
\end{remark}

The above results are concerned with positive initial equity $E(0,0)>0$. If, instead, a bank starts with non-positive equity $E(0,0)\leq 0$, then it is understood that a run is initiated (any positive maximum acceptable leverage ratio is violated), and one readily sees that the outcome is a full run, i.e., the withdrawal requests are $w^*=L_U$. Indeed, $E(0,0)\leq 0$ implies $\lm L-(\lm-1)A(0,\gamma^*)\geq L$, so we have withdrawal requests $\Phi_w(\gamma^*)=L_U$ for any quantity sold $\gamma^*\geq0$ in equation\eqref{eq:map_w}.

\subsection{Franchise value}\label{sec:franchise}

In addition to our model above, it can be relevant to consider a positive franchise value on the bank's balance sheet, due to market power in the deposit market \cite{Neumark1992,Drechsler2017}. Franchise value is generally understood as the expected present value of future profits from the ongoing deposit business, or some similar quantity. For our main analysis, we focus on the conservative case where this is zero, but here we show how a positive value can be incorporated.

Since franchise value cannot be liquidated (or used to raise cash) in a run, we can include it in the nonmarketable category of our balance sheet. Net of costs, we let it be given by a value $\delta>0$ per unit deposit. Consequently, the assets with franchise value become $A^\delta(w,\gamma):=A(w,\gamma)+\delta (L-w)$, where $A(w,\gamma)$ is the $\delta=0$ baseline considered above. The equity is $E^\delta(w,\gamma)=A^\delta(w,\gamma)-(L-w)$. Importantly, the total franchise value decreases with withdrawals and cannot be used as liquidity.

We assume $E^\delta(0,0)>0$ (otherwise, there is always a full run). As in \eqref{eq:lev_ratio}, the realized leverage ratio is then defined by  $\lambda^\delta(w,\gamma)= A^\delta(w,\gamma)/E^\delta(w,\gamma)$, and a run is initiated if $\lambda^\delta(0,0)>\lm$. We set $\kappa:=(\lm-1)\delta$, where we recall that $\lm > 1$.

When the franchise value is large relative to $\lm$, we observe the following. 

\begin{proposition}[Large franchise value]\label{prop:franchise_1}
Let $\kappa\geq1$. If the bank is solvent even in the absence of its franchise value (i.e., $E(0,0)\geq 0$), then we must have $\lambda^\delta(0,0) \leq \lm$, so there is never a run. If the bank is insolvent without its franchise value (i.e., $E(0,0) < 0$), then it is possible to have $ \lambda^\delta(0,0) > \lm$ and, in that case, the run is necessarily a full run (i.e., $w^*=L_U$).
\end{proposition}

If the franchise value is smaller, we can have partial runs, which are characterized as in Algorithm~\ref{alg:clearing}.

\begin{proposition}[Small franchise value]\label{prop:franchise_2}
Let $\kappa<1$. If $E(0,0)>0$, a run initiated by $\lambda^\delta(0,0)>\lm$ is resolved as follows. Set $\hat\lambda_{\max}:=\tfrac{\lm-\kappa}{1-\kappa}$ and suppose Assumption~\ref{ass:idf-2} holds with the more lenient threshold $\hat\lambda_{\max}\geq \lm$ in place of $\lm$. Then, the clearing solutions are given by Proposition~\ref{prop:exist} with $\hat\lambda_{\max}$ in place of $\lm$, and Algorithm~\ref{alg:clearing} applies with this substitution and the adjusted solvency condition $A(0,\gamma^*)+\delta(L-w^*)>L$. If $E(0,0)\leq 0$, there is a full run on the bank (i.e., $w^*=L_U$).
\end{proposition}

In the final Sections \ref{sec:caseofsvb} and \ref{sect:AfS-HtM}, we shall work with the conservative case of zero franchise value, relying on Algorithm \ref{alg:clearing}. A positive $\delta$ can be incorporated by using the two above propositions.

\subsection{Related recent work on bank runs}\label{subsect:recent_work_runs}

Several recent works have revisited the theory of bank runs in the wake of SVB's collapse (\cite{Jiang2023,Drechsler2023, Granja2024}). However, these works are concerned with self-fulfilling solvency runs connected purely to a positive franchise value. Our analysis is not tied to franchise value and we focus on how a given run is resolved, i.e., what step of Algorithm \ref{alg:clearing} it terminates at, which involves the study of partial runs and liquidity concerns. To highlight the overall similarities and clarify where our approach differs, we briefly discuss the model of \cite{Jiang2023} (closely related to \cite{Drechsler2023}) and then the model based on it in \cite{Granja2024}.

In \cite{Jiang2023}, a monopoly bank has assets with book value $1$, of which $c$ is cash, and $1-c$ are held in perpetuities with a unit value $p:=\frac{r_0}{r_f}$. In our terminology, the value of liquid assets is $x=c$ and the value of AfS assets is $sp=(1-c)p$. The bank has liabilities $L\leq 1$, split into insured and uninsured deposits, $L_I$ and $L_U$. Each unit deposit comes with a franchise value of $\delta := \frac{r_f - \mu(r_f)}{r_f} >0 $, where $\mu(r_f)<r_f$ is a lower rate paid to depositors, due to the bank's market power. A fraction $\pi$ of the uninsured depositors are `awake', and the premise of model is the following: if the bank's equity would be negative after a full withdrawal of $\pi L_U$, i.e., if $c+(1-c)p+(L-\pi L_U)\delta < L$, then all awake depositors run, and the bank is left insolvent (there are no withdrawals otherwise).

The above is equivalent to there being a run precisely if the leverage ratio $\lambda^{\delta}(0,0)$ exceeds a given threshold $\bar{\lambda}_{\mathrm{max}} := \frac{A^\delta(0,0)}{\pi L_U \delta}$, and, by definition, all awake depositors withdraw in that case. There is never a run if $\delta=0$, corresponding to $\bar{\lambda}_{\mathrm{max}}=\infty$. In our framework, there can still be runs with $\delta=0$, since we treat the threshold $\lm$ as a free variable to model depositors' tolerance (note that the above model also has a free variable, $\pi$, but with a different meaning). Moreover, we model how withdrawals erode the asset side through fire sale costs if AfS assets are sold as well as the additional costs and impact of remarking if the bank must tap into its HtM assets. The distinction between AfS and HtM assets in relation to run risk is also considered by  \cite{Granja2024}. However, rather than incorporating this within a model for the resolution of a run, as we do here, they study how an optimizing bank may use the HtM designation and a hedging instrument to deal with a capital requirement and the threat of a full insolvency run which is due to occur with a given probability.

In \cite{Granja2024}, the run itself is modelled as in \cite{Jiang2023} with all depositors awake ($L_U = L$, $\pi =1$) and the entire unit asset invested in the perpetuity ($c=0$). Let $p$ denote the price in a bad state where the bank's equity without franchise, $E=p-L$, is assumed negative, while $E^\delta = p - L + \delta L$ is positive (in the good state $E>0$, so nothing happens). The bad state occurs with probability $\mathbf{p}$, and, given this, the corresponding self-fulfilling run by all depositors occurs with probability $\varepsilon$. Now, the bank can designate a fraction $h$ of its assets as HtM to help pass a regulatory constraint $E_{\mathrm{reg}} := h + (1-h)p -L \geq \bar{e}$, where $\bar{e}$ is given. Moreover, at a given mark-up $\tau>1$, it can hedge away the run by paying $\tau (L-p)\mathbf{p}$ to receive $L-p$ in the bad state (and nothing in the good state): equity without franchise would then be non-negative, so the self-fulfilling run on franchise is ruled out.\footnote{Here it is implicitly assumed that the bank can pay for the hedge, e.g., by raising outside funds (despite the run threat), in a way that does not change the model mechanics.} With $\bar{e}:=1-L$ and $\theta$ small enough\footnote{Specifically, $\theta\le\theta^{*}:=\frac{(1-\varepsilon)\mathbf{p}\tau(L-p)}{1-(1-\varepsilon)h^{*}}$. We note that $\theta^*$ is smaller than the bound in \cite{Granja2024}, where there appears to be a minor discrepancy in the derivation.}, the highest expected payoff for the above inputs is obtained by (i) doing nothing if $E^\delta < e_1$, (ii) designating $h=1$ if $e_1 \leq E^\delta \leq e_2$, or (iii) doing the hedge and designating $h^*:=\frac{1-L}{1-p}$ if $E^\delta > e_2$, where the thresholds $e_1 \leq e_2$ are $e_1 = \frac{\theta}{(1-\varepsilon)\mathbf{p}}$ and $e_2=\frac{\tau}{\varepsilon}(L-p-\frac{(1-h^{*})\theta}{\tau \mathbf{p}})$.

These three cases are the central conclusion of the model in \cite{Granja2024}. Note that everything is determined by how large the (bad-state) equity $E^\delta$ is relative to the model parameters. In the first case, the bank accepts to fail with probability $\mathbf{p}$ due to the regulatory requirement. In the second case, it passes the requirement, but fails in the event of the run, occuring with probability $\mathbf{p}\varepsilon$. In the third case, it sidesteps the run. In practice, the bank may not be able to rule out an impending run as in case (iii), and it should not designate HtM in the way of case (ii) if the probability of the run is non-negligible. In Section \ref{sect:AfS-HtM}, we formulate a different optimization problem to study, for a given $\lm$, how much a bank can designate as HtM if it is to avoid remarking this after a shock to $p$, noting that the shock increases the leverage ratio and hence can spark a run in our framework. Since the HtM amount is part of our run mechanics, the bank's choice affects the nature of the run it is optimizing against.

\section{Illustration of the model mechanics}\label{sec:caseofsvb}

We start with a stylized example of how the model works. The outcomes of some representative simulations are depicted in Figure \ref{fig:demo_opt_bs}. For simplicity, herein we ignore any dead-weight losses from remarking HtM assets ($\alpha = 1$).

There are four groups of stacked bars, each representing the assets and liabilities of the bank under a specific scenario. We assume that $x=5$, $s=h=l=10$, $L_U=22$ and $L_I=5$, implying leverage of 4.375. The price impact function is linear with $b=0.01$. The first group of stacked bars refers to the optimal structure of the bank with high enough $\lambda_{max}=4.5$, that implies no run-off of the uninsured funding. For $\lambda_{max}=3.5$, i.e., with depositors considering a lower leverage as an acceptable level for the bank to be financially sound, depositors withdraw a fraction of funding that can be covered by cash ($x$) and about 1/4 of the available for sale assets. This will have a very limited impact on capital. However, in case depositors consider 2.5 as an acceptable leverage, a withdrawal of deposits would need to be covered also by a 30\% of the HtM portfolio. This would imply a remarking of the whole HtM with an additional impact on capital. As the last group of bars shows, the equilibrium withdrawal and, consequently, the devaluation of the HtM with capital impact, depends on the price impact function, which is steeper in this case ($f(\gamma)=1-0.015\gamma$).

\begin{figure}[H]
    \centering
    \includegraphics[width=0.6\textwidth]{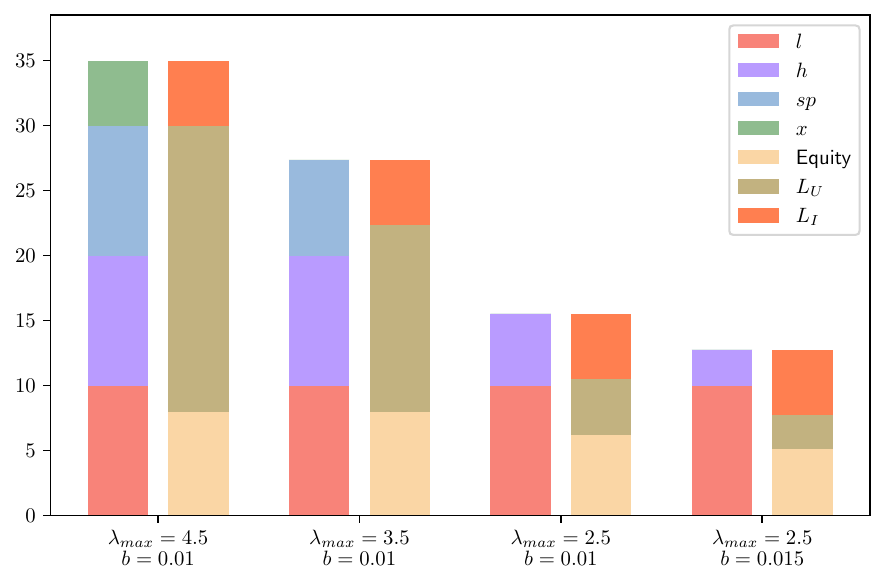}\vspace{-8pt}
    \caption{The figure shows the mechanics of the model based on a stylized bank balance sheet and different scenarios of $\lambda_{max}$ and price impact sensitivity $b$. }
    \label{fig:demo_opt_bs}
\end{figure}

Notably, the model can be calibrated using publicly available data to get realistic insights on banks' balance sheet vulnerabilities. To demonstrate how calibration can be implemented and how to draw conclusions about the banking system, we look at the case of Silicon Valley Bank (SVB) that experienced detrimental bank-run in March 2023.

The key trigger of the run was related to the unrealized losses in the securities portfolio of SVB. The losses --- related to changes in marked-to-market value of debt securities --- were excluded from SVB's capital but economically weighed on SVB's viability. Piles of unrealized losses may cross a tipping point whereby market participants change perception and expect HtM asset selling might be looming and amplify financial distress. To reflect on that, we assume that unrealized losses are deducted from capital and increase the leverage ratio.

The evolution of some key categories of SVB's balance sheet are shown in Table \ref{tab:svb_data}, based on the financial reports of the bank and the FRB report dissecting SVB case \citet{FRB2023}.  Between Q1 2020 and Q1 2022, i.e., one year before the collapse, total deposits grew more than threefold, from \$56 billion to \$181 billion. Only a small fraction of the funding base was insured deposits (\$9 billion out of the \$181 billion in Q1 2022). The absorbed funding was mostly invested into HtM securities (increase from \$10 billion to \$101 billion). When expectations about interest rate increases built, and eventually interest rates started to rapidly raise, the market value of the securities was gradually declining. However, thanks to the accounting treatment regarding how their value would be reflected in the financial results, this was only reflected in a build-up of the unrealized losses (increase from a gain of \$0.8 billion to a loss of \$15 billion in Q2 2022)
\begin{table}[!ht]
    \centering
    \resizebox{\textwidth}{!}{\begin{tabular}{ll|ccccccccp{2cm}p{2cm}|cp{2cm}}
    \hline
        ~ & ~ & \textit{In USD billion} & ~ & ~ & ~ & ~ & ~ & ~ & ~ & ~ & ~ & \textit{Ratio} & ~ \\ \hline
        ~ & ~ & Total deposits & Other funding & Insured deposits & Capital & Total assets & Cash & AfS & HtM & Unrealised Gains/Losses (HtM) & Unrealised Gains/Losses (AfS) & Tier 1 lev. ratio & Lev. ratio implied by Unrealised Gains/Losses \\ \hline\hline
        2020 & q1 & 56 & 8.9 & 5 & 10.1 & 75 & 8 & 20 & 10 & 0.8 & 1.6 & 7.4 & 6.0 \\ 
        ~ & q2 & 70 & 7.9 & 5 & 12.1 & 90 & 10 & 25 & 10 & 0.8 & 1.6 & 7.4 & 6.2 \\ 
        ~ & q3 & 80 & 6.5 & 5 & 13.5 & 100 & 12 & 28 & 12 & 0.8 & 1.6 & 7.4 & 6.3 \\ 
        ~ & q4 & 95 & 8.8 & 5 & 16.2 & 120 & 13 & 35 & 15 & 0.8 & 1.6 & 7.4 & 6.5 \\ \hline
        2021 & q1 & 110 & 11.7 & 5 & 18.3 & 140 & 16 & 30 & 40 & 0.0 & 0.0 & 7.6 & 7.6 \\ 
        ~ & q2 & 130 & 18.3 & 6 & 21.7 & 170 & 18 & 25 & 60 & 0.0 & 0.0 & 7.8 & 7.8 \\ 
        ~ & q3 & 152 & 10.0 & 7 & 23.0 & 185 & 21 & 25 & 80 & -0.5 & 0.0 & 8.0 & 8.2 \\ 
        ~ & q4 & 172 & 16.9 & 8 & 26.1 & 215 & 23 & 27 & 103 & -1.0 & 0.0 & 8.2 & 8.6 \\ \hline
        2022 & q1 & 181 & 17.3 & 9 & 26.7 & 225 & 22 & 27 & 101 & -7.5 & -1.5 & 8.4 & 12.7 \\ 
        ~ & q2 & 170 & 20.0 & 10 & 25.0 & 215 & 20 & 27 & 98 & -11.5 & -2.0 & 8.6 & 18.7 \\ 
        ~ & q3 & 162 & 28.5 & 10 & 24.5 & 215 & 19 & 27 & 95 & -16 & -3.0 & 8.8 & 39.2 \\ 
        ~ & q4 & 160 & 31.0 & 10 & 24.0 & 215 & 17 & 27 & 93 & -15 & -3.0 & 9.0 & 35.9 \\ \hline
    \end{tabular}}
    \caption{Balance sheet evolution of the SVB\newline Note: Numbers shown starting from the beginning of 2020 when the dynamics of assets and liabilities started to materially change. Unrealised Gains/Losses on neither HtM nor AfS portfolios are included into CET1, with the treatment of the AfS part following SVB's choice allowed by FED's enhanced prudential regulatory (EPR) framework; ``Lev. ratio implied by Unrealized Gains/Losses'' $=$ [Total assets]/([Capital]-[Unrealised Gains/Losses (HtM)]-[Unrealised Gains/Losses (AfS)]); ``Other funding'' $=$ calibrated such that balance sheet identity is preserved and leverage ratio reported by SVB ([Tier 1 ratio]) equals to the calculated leverage ratio (i.e., [Total assets]/[Capital]), ``AfS'' $=$ securities in available for sale accounting portfolios; ``HtM'' $=$ securities in held-to-maturity accounting portfolios. ``Other funding'' covers other types of liabilities, like own debt issued or negative market value of derivatives that are not subject to run risk and can be treated in the model as ``Insured deposits''.\newline Source: SVB financial reports and \citet{FRB2023}}\label{tab:svb_data}
\end{table}

We ran our framework for the 12 snapshots of SVB prior to its meltdown in 2023. Notably, for the price impact function, we assume $f(\gamma)=1-0.0005\gamma$ as a baseline case which means that \$10 billion of sold securities would have 50 bp impact on their market prices. However, since it is an ad-hoc calibration, we present the outcomes of simulations for a range of admissible values between 0.0001 and 0.002. Moreover, $\lambda_{max}$ is set to 7.5. Figure \ref{fig:svb_sim_alpha_unr_loss_withdrawal} shows that already at the the beginning of 2022 financial conditions of SVB became conducive to bankruptcy. In Q1, SVB would stay solvent but may already be considered illiquid and as of the subsequent periods, assuming a higher sensitivity of asset values in fire sales, the bank could be considered both illiquid and insolvent. The outcomes of the simulations indicate that, given the mounting unrealized losses, the financial conditions of SVB would deteriorate sharply.
\vspace{-6pt}
\begin{figure}[H]
    \centering
    \includegraphics[width=0.6\textwidth]{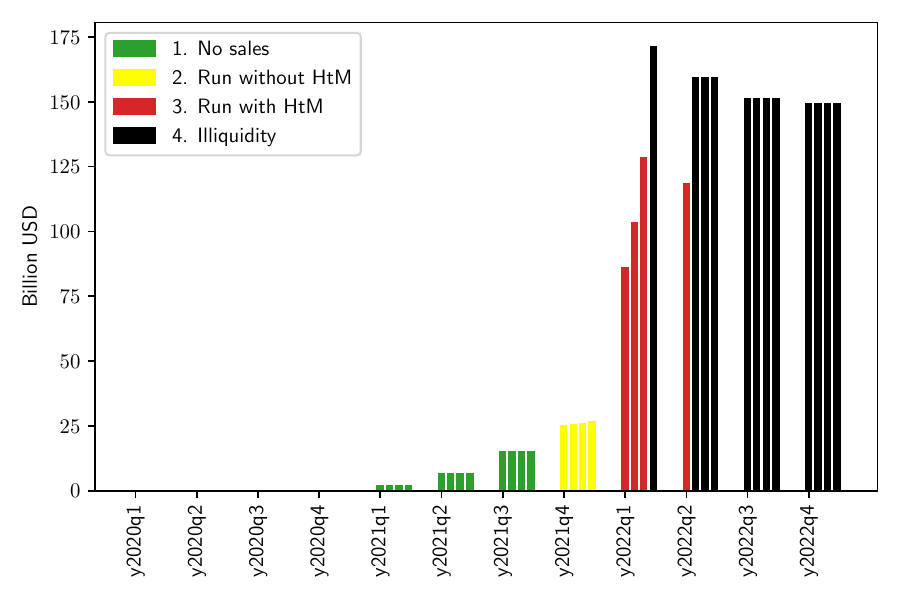}\vspace{-8pt}
    \caption{The figure shows equilibrium funding withdrawals from SVB (in USD billion) in a hypothetical scenario of unrealized losses in AfS and HtM portfolios being realized in the value of the securities portfolios and for balance sheets observed between Q1 2020, and Q4 2022 and for various parameterizations of price impact functions. For each period there is a group of bars, each of them corresponding to one parameter of the linear impact function ($b$) from the set $\{0.0001,\ 0.0002,\ 0.001,\ 0.002\}$. For instance, 0.0001 corresponds to 10 bp impact on asset prices when \$10 billion of securities are liquidated, like in \citet{Greenwood2015}. Max acceptable leverage ratio $=$ 7.5. Colored bars correspond to steps 1--4 of the Algorithm  \ref{alg:clearing}. Grey and black bars indicate insolvency differentiating liquidity and illiquidity state.}
    \label{fig:svb_sim_alpha_unr_loss_withdrawal}
\end{figure}

Notably, the qualitative conclusions from these simulations broadly hold in case some key parameters of the model are altered. Specifically, we ran a sensitivity analysis of the results to changes in the deadweight loss ($\alpha$) and maximum acceptable leverage ($\lambda_{max}$), whose exact values is less obvious, and is more challenging to derive from observables. The outcomes are shown in Figure \ref{fig:svb_sim_deadweight_unr_loss_withdrawal}. In a nutshell, the model consistently indicated Q1 2022 as a transition quarter between a viable balance sheet structure to a run-prone structure with imminent insolvency risk of the SVB. Some small differences are in the magnitude of the run in terms of the funding withdrawal and, in case of larger deadweight costs, the model signals an insolvency shifted by one quarter earlier.

\begin{figure}[H]
    \centering
    \includegraphics[width=0.9\textwidth]{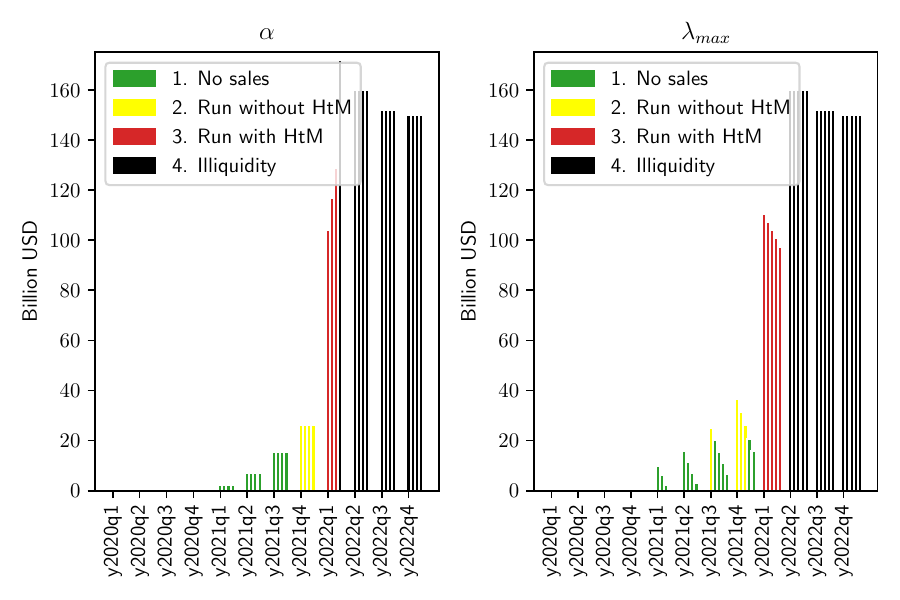}\vspace{-8pt}
    \caption{The figure shows sensitivity of equilibrium funding withdrawals from SVB (in USD billion) to deadweight loss ($\alpha$) and maximum acceptable leverage ($\lambda_{max}$) in a hypothetical scenario of unrealized losses in AfS and HtM portfolios being realized in the value of the securities portfolios and for balance sheets observed between Q1 2020, and Q4 2022 and for various parameterizations of price impact functions. For each period there is a group of bars, each of them corresponding to one parameter of $\alpha$ from the set $\{1.00,\ 0.98,\ 0.96,\ 0.94\}$ and maximum acceptable leverage $\lambda_{max}$ from the set $\{7.1,\ 7.3,\ 7.5,\ 7.7,\ 7.9\}$. The price impact function sensitivity is set to $0.0002$. Colored bars correspond to steps 1--4 of the Algorithm  \ref{alg:clearing}. Grey and black bars indicate insolvency differentiating liquidity and illiquidity state.}
\label{fig:svb_sim_deadweight_unr_loss_withdrawal}
\end{figure}

The above simulations report only the minimal clearing solution $(w^\downarrow,\gamma^\downarrow)$, which results in a minimum necessary volume of funding withdrawals that bring the bank back into region of acceptable leverage. However, as noted in Proposition~\ref{prop:exist} and Remark~\ref{rem:all_solutions}, the clearing problem may admit a lattice of solutions, bounded below by $(w^\downarrow,\gamma^\downarrow)$ and above by the maximal clearing solution $(w^\uparrow,\gamma^\uparrow)$. Whenever these two differ, it is \emph{possible} that depositors push the system into the worst-case equilibrium $(w^\uparrow,\gamma^\uparrow)$; this is in the spirit of the self-fulfilling runs of \cite{Diamond1983,GoldsteinPauzner2005}. To demonstrate the multiplicity of equilibria, Figure~\ref{fig:svb_minmax_quarters} contrasts the minimal and maximal solutions across the SVB dataset with $\lm = 7.5$, $b = 0.0005$, and $\alpha = 0.9$. Three regimes emerge in this data: through 2021Q1, the solution is unique and benign (no run arises); from 2021Q2 to 2021Q4, the two solutions diverge with little to no liquidations in the minimal clearing solution but with HtM remarking in the maximal clearing solution; from 2022Q1 onward, uniqueness of the illiquidity solution is recovered.
\begin{figure}[H]
\centering
\includegraphics[width=0.9\textwidth]{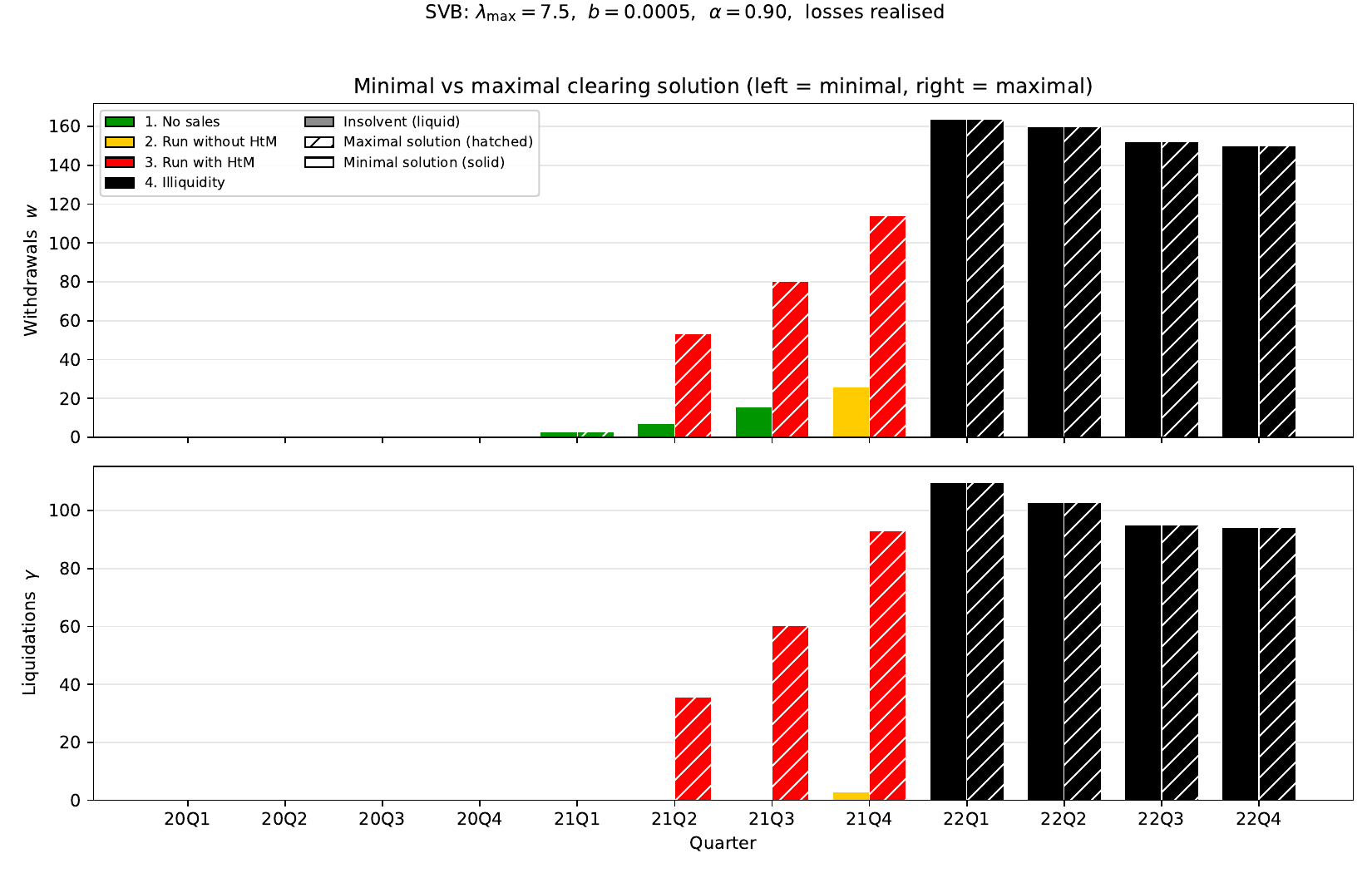}\vspace{-8pt}
\caption{Minimal (solid, left bar) versus maximal (hatched, right bar) clearing solution for SVB across the quarters of Table~\ref{tab:svb_data}, with $\lm = 7.5$, price-impact sensitivity $b = 0.0005$, dead-weight factor $\alpha = 0.9$, and unrealized losses realized. The top panel shows equilibrium withdrawals $w$, the bottom panel the liquidations $\gamma$. Bars are coloured by the step of Algorithm~\ref{alg:clearing} (green: no sales; amber: run without HtM remarking; red: run with HtM remarking; black: illiquidity), grey when insolvent but liquid. The gap between the two bars in 2021Q2--Q4 measures the run risk hidden by the minimal solution; it collapses into a unique illiquid solution from 2022Q1.}
\label{fig:svb_minmax_quarters}
\end{figure}

Figure~\ref{fig:svb_minmax_lambda} traces the minimal and maximal clearing solutions for 2021Q4 as functions of $\lm$ with the region of multiplicity of solutions shaded. When $\lm$ is low, the unique solution is a full run leading to illiquidity. As $\lm$ rises, the minimal solution improves through the steps of Algorithm~\ref{alg:clearing} while the maximal solution remains a HtM-remarking run. The shaded region ($\lm \in [6.6,11.3]$) is precisely the range in which a run is possible but not forced.
\begin{figure}[H]
\centering
\includegraphics[width=0.7\textwidth]{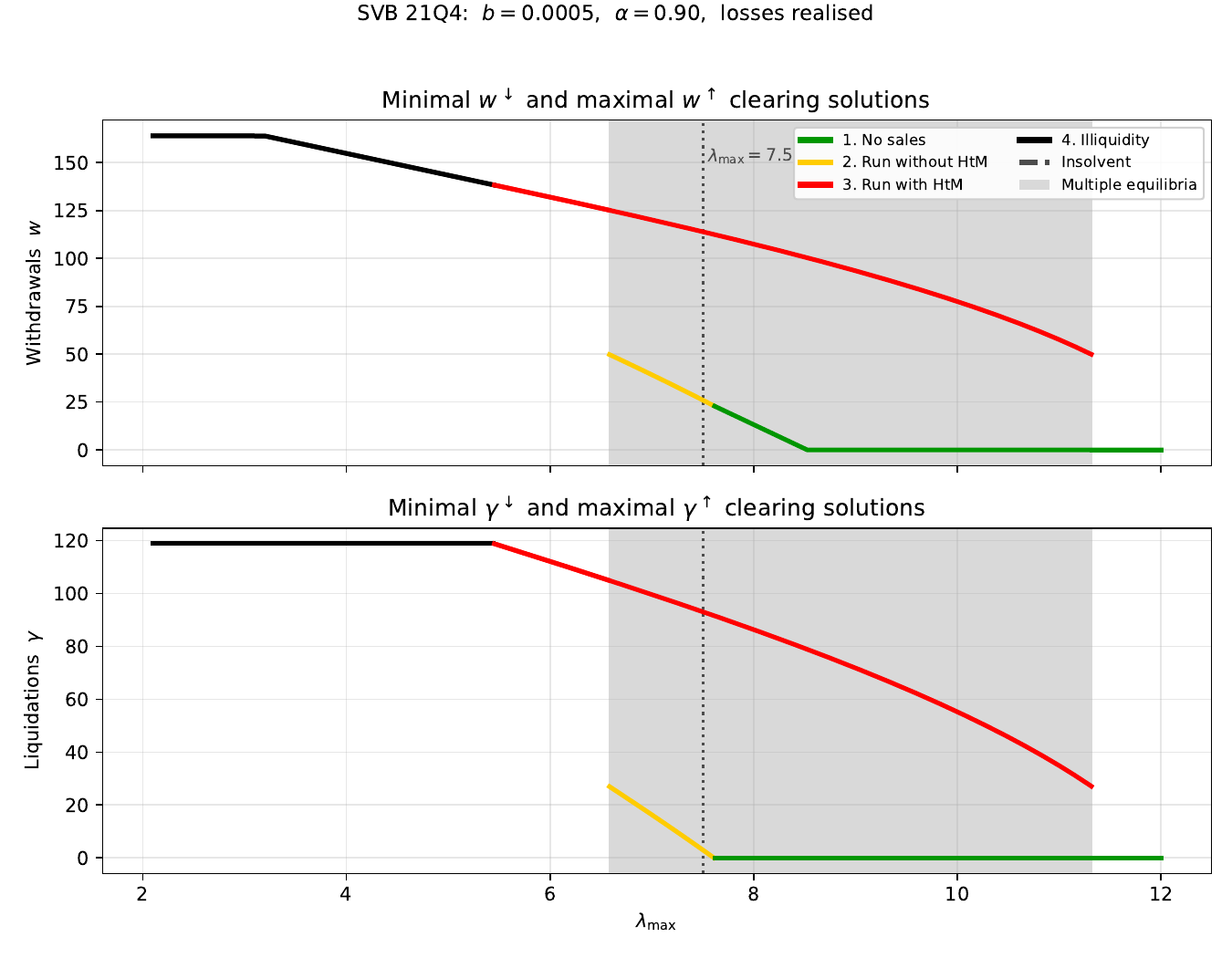}\vspace{-8pt}
\caption{Minimal $(w^\downarrow,\gamma^\downarrow)$ and maximal $(w^\uparrow,\gamma^\uparrow)$ clearing solutions for SVB in 2021Q4 as functions of the maximum acceptable leverage $\lm$ ($b = 0.0005$, $\alpha = 0.9$, unrealized losses realized). Curves are coloured by the step of Algorithm~\ref{alg:clearing} and dashed where the bank is insolvent. The shaded band marks the values of $\lm$ for which the minimal and maximal solutions differ (multiple equilibria); the dotted line marks the calibrated $\lm = 7.5$.}
\label{fig:svb_minmax_lambda}
\end{figure}

\section{On the balance between HtM and AfS}\label{sect:AfS-HtM}

Until now, we have taken for granted the bank's balance sheet composition. In this section, we endogenize the bank's decision to designate part of the marketable securities as HtM. Importantly, we keep all other aspects of the balance sheet fixed, so that it is only the allocation of the marketable securities between the AfS and HtM categories that the bank can vary.

There are simple and sound reasons for banks to rely on the HtM rules. Their business models (at least in the case of the more traditional universal banks) rely on maturity transformation, i.e., they invest in long-term projects financed by short-term funding. However, investment in long-term projects may be achieved via non-marketable loans or via bonds and equities, which are more liquid, frequently exchange-traded. At the same time, liquidity needs lead banks to also hold bonds to be able to quickly raise cash, and the holding period of those bonds may be short. Since these instruments are marked-to-market, they create volatility in the banks' profit and loss accounts. To decrease the variability of income from bonds and other securities, which are intentionally held for long-term investment purposes (i.e., held-to-maturity) regulators introduced accounting rules that allow banks to recognize these bonds at amortized costs. In the opposite direction, there are also natural pressures to not rely too heavily on HtM. First of all, it represents a real legal commitment to hold the assets to maturity and in turn involves some loss of flexibility. Moreover, whenever a bank looks to sell even a fraction of those assets, they would need to derecognize the whole HtM portfolio, thus forcing them to acknowledge unrealized losses while also signaling their inability to stay true to their commitment. 

\cite{Kim2023} provide empirical evidence that banks' use of the HtM category often appears aimed at optimizing around capital requirements and accounting measures such as net income and economic value of equity, especially when there are concerns about negative valuation impact on solvency (e.g.~the rising rates in 2022--23). This section presents a stylized version of such a setting, where a bank looks to maximize its HtM allocation in anticipation of a shock. Without accounting for the fact that this could ignite a run, the bank's incentive is to hold as much as possible in the HtM category, in order to insulate the bank from price fluctuations up until maturity and makes the balance sheet look as strong as possible. When incorporating our bank run model, however, it becomes necessary for the bank to consider its ability to honor the commitment of holding HtM securities to maturity. Specifically, the bank should have enough liquid assets to cover potential funding withdrawals without the need to remark and liquidate HtM portfolios in most plausible stress scenarios. Otherwise, the possible benefits from the HtM accounting rules, making income less sensitive to revaluation shocks, are questionable.

\subsection{Maximising the HtM designation subject to price shocks}

Consider a bank with a balance sheet as in Figure \ref{fig:balance_sheet}. The bank has total assets $A$ of which $\bar A := A - x - \ell$ are held in marketable securities that may be designated as some combination of AfS and HtM. To account for the impact of a possible future devaluation of the marketable securities, we introduce a simple one-period model, wherein the HtM and AfS allocations are decided at time $0$, before a price shock which arrives at time $1$. Given the new price, a potential run is resolved within the equilibrium formulation from Section \ref{sec:clearing}. As is implicit in the word shock, we stress that we shall only consider downside risk.

\vspace{2pt}\begin{figure}[H]
\centering
 \includegraphics[width=0.55\textwidth]{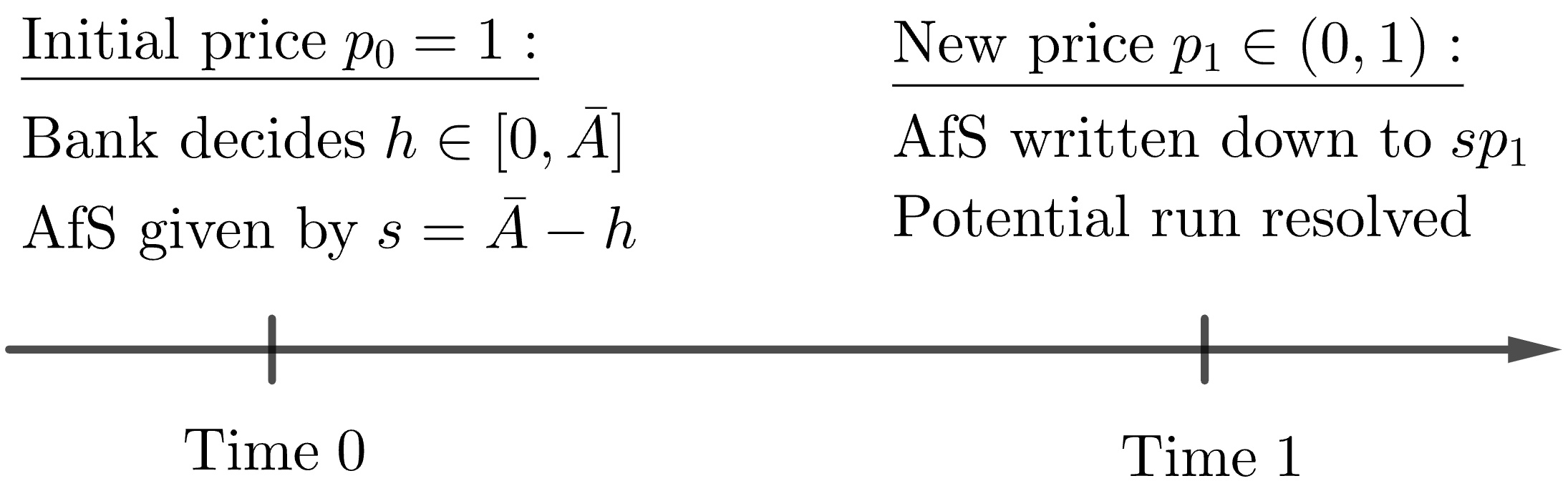}
\caption{Graphical depiction of the one-period model. Given the price shock, the bank suffers realized losses on the AfS assets, while the value of the HtM assets is unchanged. The unrealized losses can only be an issue if the state of the bank is such that a run forces it to remark the HtM assets.}
\label{fig:price_shock}
\end{figure}

As soon as a potential run could necessitate the sale of marketable securities, care must be taken if the bank aims to act responsibly when deciding on its HtM designation. First of all, regulations call for the bank's intent and ability to hold the HtM assets to maturity. This implies that the bank should seek to avoid having to remark the HtM assets for a reasonable range of price shocks. Secondly, since the entire holdings must be remarked at once, remarking the HtM assets can cause a significant shock to the balance sheet from both unrealized losses and dead-weight losses (as well as fire sale losses). Thirdly, such remarking may be a bad signal to the outside world and may remove the flexibility to recognize those same assets as HtM in the future. Thus, the bank management has incentives to avoid remarking of HtM assets for what it considers to be likely values of the price shock.

The above considerations lead us to formulate the following optimization problem based on the one-period model depicted in Figure \ref{fig:price_shock}: the bank maximizes the amount of HtM assets that it holds, at time $0$, subject to having enough AfS to cover liquidity needs in a potential run at time $1$, for any price shock within some given range. That is, the bank solves for the maximum $h^*\in[0,\bar{A}]$ at time $0$ such that there is no need to remark HtM assets at time $1$ if the new price $p_1$ is above or equal to a given threshold price. From here on, we let $p_1$ stand for this threshold price, as it is the only value of the price at time $1$ that we need to consider (larger prices amount to less equilibrium sales). The optimization problem can then be written as
\begin{equation}\label{eq:optim-h}
 h^* = \max\{h \in [0,\bar A] \; | \; \gamma^\downarrow(p_1,\lm) \leq \bar A - h\}
\end{equation}
or, equivalently, $h^*= \bar A - s^*$ where
\begin{equation}\label{eq:optim-s} s^*=\min\{s \in [0,\bar A] \; | \; \gamma^\downarrow(p_1, \lm) \leq s\}.
\end{equation}
Here we have made explicit the dependence of the (minimal) equilibrium liquidations $\gamma^\downarrow$ on the threshold price $p_1$ and the maximum accepted leverage ratio $\lm$. If the bank cannot satisfy the constraint, then everything is held as available-for-sale, so we set $h^*=0$ and $s^*=\bar{A}$ in that case.

\begin{remark}
Considered as an internal risk management problem, the threshold price may be seen as a reflection of the banks' risk tolerance, expressed in terms of accepted negative valuation limits. This could, for instance, be related to Value-at-Risk or Expected Shortfall limits.
\end{remark}

For tractability, we will model the price impact by a linear inverse demand function. Given the values $p_1$ and $\lm$, the next result fully characterizes the optimal behavior of the bank.

\begin{proposition}[Maximal HtM designation]\label{prop:optim} Assume $\lm > 2$
and let $\bar{l} := (\lm-1)/ \lm$. For a given threshold price $p_1 \in (0,1)$, at time $1$, we assume a linear inverse demand function, which takes the form $f(\gamma) = p_1(1 - b\gamma)$ with $b < 1/[(\lm-1)\bar A]$. We then have the following two cases.
\begin{description}
    \item[Case 1] If $ L_U \leq x$ or $L \leq x + \bar{l}(\bar A + \ell)$, then, by designating everything as HtM, the bank can insulate itself from the price shock and have no risk of a run on the marketable securities, so the bank will hold $h^*:=\bar{A}$ and $s^*:=0$.

    \item[Case 2] If $ L_U > x$ and $L > x + \bar{l}(\bar A + \ell)$, then the bank chooses to hold $h^*:=\bar{A} - s^*$ as HtM, where $s^* = s^*(p_1,\lm)$ is specified by \eqref{eq:min_AfS_explicit} below.
    \end{description}
\end{proposition}

In Case 2 of Proposition~\ref{prop:optim}, the optimal value $s^*(p_1,\lm)$ is such that the bank needs to consider the worst case of two potential fictitious runs when faced with a price shock $p_1$: (i) a partial withdrawal run so that not all liabilities are withdrawn resulting in $s_{\mathrm{PW}}(p_1,\lm)$ and (ii) a full withdrawal run resulting in $s_{\mathrm{FW}}(p_1,\lm)$. If neither case can be satisfied by the available assets ($\bar{A}$) then the bank must hold all assets as AfS. This amounts to \begin{equation}\label{eq:min_AfS_explicit}
s^*(p_1,\lm) = \min \{ s_{\mathrm{PW}}(p_1,\lm) , s_{\mathrm{FW}}(p_1,\lm) , \bar{A} \}.
\end{equation}
This logic follows similarly to the proof of Proposition~\ref{prop:clearing_algo}.
The amount of AfS assets $s_{\mathrm{PW}}$ or $s_{\mathrm{FW}}$, with partial or full equilibrium withdrawals, is defined explicitly as follows. Firstly, assuming $w^* = L_U$ so that all assets are withdrawn in the run, we get:
\begin{align*}
s_{\mathrm{FW}}(p_1,\lm) &:= \begin{cases} \max \{s_1 , s_2 \} \; , \;  &\text{if }\; \max \{s_1 , s_2 \} \leq \bar A \text{ and }  p_1 > 2b(L_U-x) \\
    + \infty &\text{else} \end{cases}
\end{align*}
where 
\begin{equation*}
s_1 := \frac{1-\sqrt{1-2b(L_U-x)/p_1}}{b} , \quad s_2 := \frac{\bar A+\ell - p_1 s_1 (1-b s_1)-L_I/\bar{l}}{1-p_1(1-bs_1)}.
\end{equation*}
Secondly, assuming $w^* < L_U$ for a partial withdrawal, we arrive at:
\begin{align*}
s_{\mathrm{PW}}(p_1,\lm) & := \begin{cases} \displaystyle \frac{p_1 - \bar{l} - M}{b p_1} &\text{if }\; \displaystyle \frac{p_1 - \bar{l} - M}{b p_1} \leq  \bar s  \text{ and } p_1 \geq \bar{l} + b[L-x-\bar{l}(\bar A+\ell)] + C \\
    + \infty &\text{else} \end{cases} 
\end{align*}
with
\begin{equation*}
\quad M =\sqrt{(p_1 - \bar{l})^2 - 2p_1 b [L-x-\bar{l}(\bar A+\ell)]},\quad 
C = \sqrt{ b\bigl(L-x-\bar{l}[\bar{A}+\ell]\bigr)\bigl(2\bar{l} + b(L-x-\bar{l}[\bar{A}+\ell])\bigr)  },
\end{equation*}
where $\bar s:=0$ if $L_U \leq G(0)$ and $\bar s:=\bar{A}$ if $L_U \geq G(\bar{A})$ while $\bar{s}$ is the (unique) solution to $ G(s)=L_U$ on $(0,\bar{A})$ if $G(0)< L_U < G(\bar{A})$, for\vspace{-2pt}
\begin{align*}
&\qquad \qquad G(s) := \lm L - (\lm-1)[x + \bar\gamma(s)\bar{f}(\bar\gamma(s))+(s-\bar\gamma(s))f(\bar\gamma(s))+\bar A + \ell - s]\quad \text{and} \\
& \bar\gamma(s) := \frac{p_1[(\lm-1)bs-1]+\sqrt{p^2_1[(\lm-1)bs-1]^2 + 4 \lm^2 p_1 b (\bar{l}-\frac{1}{2})( L-x-\bar{l}[\bar A + \ell-s(1-p_1)])}}{p_1 b(\lm-2) }.
\end{align*}

We wish to note that the dead-weight loss $\alpha$ does not enter into these expressions. This is because the AfS holdings $s^*$ are exactly such that the HtM portfolio is not remarked given the price shock $p_1$.

\subsection{Case studies of the optimal HtM}\label{app:beyondSVB}

To illustrate how we can analyze the HtM holdings of a given bank within the setting of Proposition \ref{prop:optim}, we consider two case studies. First, we refer to the stylized example of a bank balance sheet composition introduced at the beginning of Section \ref{sec:caseofsvb}. Second, we use the case of SVB to show what level of HtM our model would imply to ensure SVB's resilience to funding shocks related to the run-off risk. As in Section~\ref{sec:caseofsvb}, herein we ignore any dead-weight losses from remarking HtM assets (i.e., $\alpha = 1$).

In the first experiment, we determined the optimal level of HtM for the four scenarios considered in section 4. The first 3 scenarios are differentiated by the maximum acceptable leverage set at 4.5 (above the observed leverage in the example), 3.5 (below the observed level and with p=1.0 implying only tapping liquidity from AfS) and 2.5 (necessitating to mobilize the HtM to cover the funding withdrawal). In those 3 scenarios, the sensitivity of prices to transacted volumes of securities is set to $b=0.01$. The fourth scenario, also with acceptable leverage at 2.5 is characterized by an additionally stressed market condition captured by a more sensitive price impact function, i.e., with $b=0.015$. The results are shown in Figure \ref{fig:styl_opt_htm_range_lambdas}.

Clearly, for the two cases of the baseline $\lambda_{max}=4.5$ and $\lambda_{max}=3.5$, the optimal HtM is above the status quo level of 10. In the 3.5 case, only some very severe shocks to the initial asset prices, i.e., for $p<0.8$, would mean that to cover the funding outflows the stylized bank would need to keep all securities as available for sale. The two more stressful cases, in which the depositors accept leverage at 2.5 and not higher, the bank is bound to fully designate securities as available for sale.

\begin{figure}[H]
    \centering
    \includegraphics[width=0.65\textwidth]{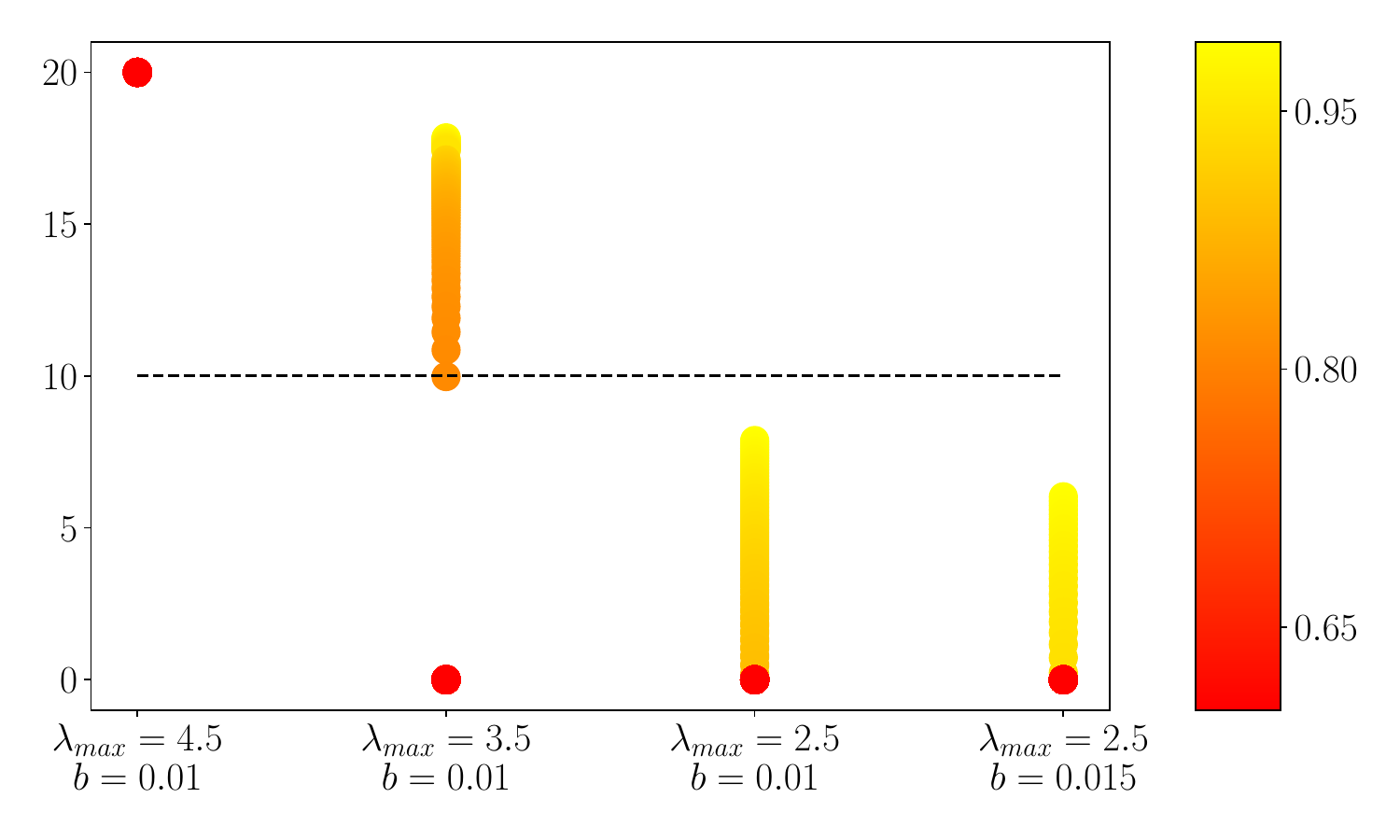}
    \caption{The figure shows theoretically optimal HtM portfolios of the bank in the stylized example. Each colored dot represents an optimal HtM (y-axis) for a scenario of $\lambda_{max}$ and price impact sensitivity $b$ (x-axis). Dashed black line represents the level of the HtM portfolios reported by the bank.}
    \label{fig:styl_opt_htm_range_lambdas}
\end{figure}

We ran the second experiment based on the case study of SVB with an objective to see how sensitive the optimal level of HtM is to the target leverage ratio around the actual leverage ratios of SVB prior to the March 2023 events. In this way, we can see how much the depositors' tolerance to banks' leverage would need to change, so that the reported holdings of the banks would no longer be commensurate with the overall balance sheet structure or consistent only with some very large price shocks ($p$) assumed in the Proposition \ref{prop:optim}. 

Specifically, we calculated the optimal HtM for a range of leverage ratios around those reported by the banks and for two snapshots of the data (end of 2021 and end of 2022). The outcomes are shown in Figure \ref{fig:svb2_opt_htm_range_lambdas}.
There are two qualitatively different regions in this figure. One corresponds to the leverage ratios implying that the optimal level of HtM is not below the reported volumes of HtM. It corresponds to red dots lying above the black dotted line and indicates that the HtM holding does not violate the ability of the bank to lock securities in portfolios where it is less straightforward to tap liquidity from. The remaining leverage ratios constitute a region where the optimal HtM is not consistent with the reported HtM for some initial shocks to the value of securities portfolios. This means that the bank may not have the ability to hold such a large volume of securities in the HtM portfolio, depending on the economic outlook assumed by the bank in its asset and liability management. Moreover, as SVB was approaching the default in March 2023, the observed leverage ratio was deeper in the region where admissible asset price shocks would mean that the bank allocated an excessive amount of securities into the HtM portfolios. Graphically, some dots in Fig. \ref{fig:svb2_opt_htm_range_lambdas}, bottom pane, corresponding to $\lambda_{max}=7.9$ and $\lambda_{max}=8.0$, lie below the dashed black line indicating the volume of the HtM as of Q4 2022. To summarize, the optimization introduced in Proposition \ref{prop:optim} is a practical tool to detect some inconsistencies in banks' balance sheet structures that may expose them to bank-run risk.

\begin{figure}[H]
    \centering
    \includegraphics[width=0.65\textwidth]{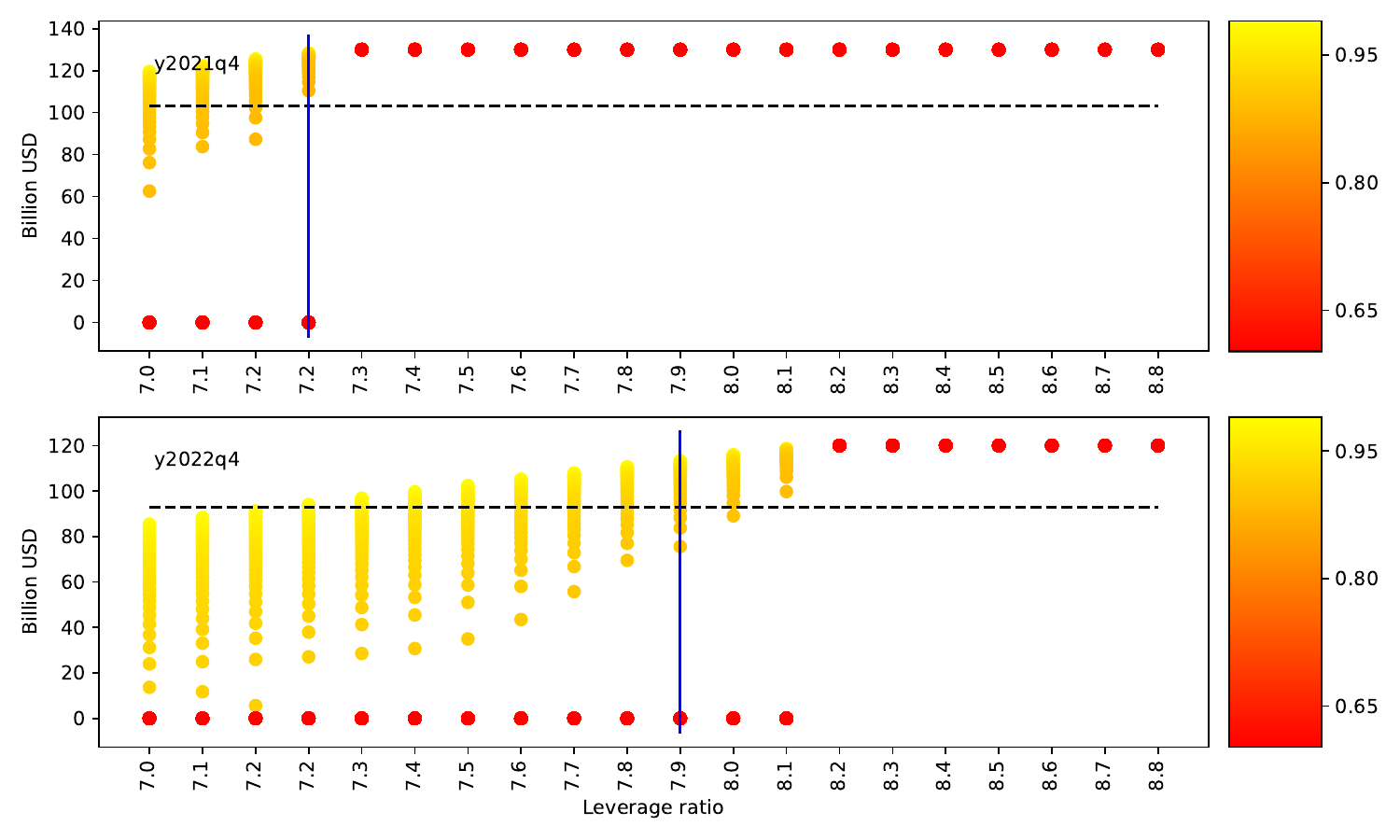}
    \caption{The figure shows theoretically optimal HtM portfolios of SVB for maximum accepted leverage ratios from a range in x-axis. Solid blue line represents the reported Tier 1 leverage ratio. Dashed black line represents the level of the HtM portfolios reported by the bank.}
    \label{fig:svb2_opt_htm_range_lambdas}
\end{figure}

\section{Discussion}\label{sec:concl}

In this work, we have built a model that can help explain the mechanics of bank runs in relation to structural vulnerabilities of the bank balance sheet, i.e., a large share of uninsured liabilities, accounting rules that may initially `hide' revaluation risk of bank assets, and insufficient liquidity buffers to cover funding withdrawals, especially in the presence of fire-sale risk. The parsimonious setup of the model makes it straightforward to devise simple indicators of banks' exposure to run risks, as quantified by equilibrium withdrawals of funding that are easily computed via our explicit algorithm.

By inputting publicly available balance sheet data, the model allows us to analyze the build-up of run risk at SVB in the years before its collapse. First, we demonstrate that, during 2022, its balance sheet composition and growth were creating conditions for an imminent bank run. Whether this applies already at the beginning of 2022 or later in the year, and whether it comes with illiquidity or insolvency implications, depends on investors' perception about unrealized losses materializing or not and on the volume of cash that could be raised by liquidating securities portfolios, i.e., on the sensitivity of market prices to the volume of off-loaded securities. Second, we show that  SVB's choice to park majority of deposits in HtM asset portfolios violated prudent risk management principles, as the bank lost its ability to mobilize liquidity necessary to prevent a rational run by depositors. Finally, we are able to monitor when and how liquidity problems transform from manageable, meaning that they are avoidable by investing new deposit volumes into AfS assets, to a state where default is imminent. 

In closely related recent works, the empirical findings of \cite{Granja2023} and \cite{Kim2023} have cast doubt on whether---or at least to what extent---banks take seriously the intent and ability to hold HtM assets until they mature. Our conclusions reveal that, under reasonable assumptions, such negligence was indeed foreboding of SVB's failure when one factors in a simple model of depositor runs---with the implied outcomes being detrimental already throughout 2022. Additional case studies for banks in other jurisdictions could provide further insights.

The severity of SVB's collapse and the ensuing crisis have spurred a serious public debate about the future of the HtM framework and possible regulatory responses. Not least, there have even been calls to abandon it, but \cite{Kim2023} note that this may not be feasible in view of enduring support for the original motivations behind the rules, particularly in relation to banks' economic hedge of interest rate risk through its deposit franchise in scenarios that do not lead to a run. Similarly, \cite{Granja2023} stresses the need to carefully consider trade-offs, noting in particular the aforementioned hedge and the concern that mark-to-market valuations may have been a key propagator of contagion during the Global Financial Crisis. In view of this, \cite{Kim2023} mention increased enforcement of the existing GAAP restrictions on the intent and ability to hold HtM designations and, in parallel, \cite{Granja2023} suggests the need for more thorough scrutiny and evaluation of the reasonableness of banks' claims about their ability to hold assets to maturity. This traces out an important role for regulators.

There exist supervisory tools to assess banks' total balance sheet sensitivity to shocks, for instance the EBA technical standards to \emph{`evaluate if there is a large decline in the net interest income or in the economic value of equity that could trigger supervisory measures'}.\footnote{See https://www.eba.europa.eu/activities/single-rulebook/regulatory-activities/supervisory-review-and-evaluation-process-srep-1} However, the above considerations raise the issue of how to analyze and assess the reasonableness of banks' HtM portfolios. We have taken a tractable first step in that direction, by endogenizing a given banks' HtM designation subject to our model of depositor runs. This allowed us to derive a simple conceptual framework for the reasonableness of their choice, or desire, to hold certain levels, in a way that is easy to assess quantitatively. Specifically, for a given balance sheet, we characterize the bank's maximal HtM designation such that it is able to safely hold on to these assets in a potential depositor run driven by a negative price shock of some specific size. For a bank's HtM designation to be reasonable, it should be in line with a plausible range of price shocks. By calibrating to the balance sheets of SVB over time, we obtain clear warning signs about the reasonableness of its HtM designations already from Q3 2021 or Q1 2022, depending on what we assume about the depositors' propensity to run. Notably, these indications appear before the size of unrealized losses at SVB ballooned. Thus, our model highlights potential concerns about SVB's reasonable usage of the HtM framework before this could be assessed simply from the size of their HtM holdings and corresponding unrealized losses.

\appendix

\section{Proofs of main results}\label{app:proofs}

\subsection{Proof of Proposition \ref{prop:exist}}

One readily confirms that the two mappings \eqref{eq:map_w} and \eqref{eq:map_gamma} are non-decreasing in $(w,\gamma) \in [0,L_U]\times [0,s+\alpha h]$. As the domain of $\Phi$ defined by \eqref{eq:map_w}--\eqref{eq:map_gamma} is a complete lattice, the claim therefore follows from Tarski's fixed point theorem.

\subsection{Proof of Proposition \ref{prop:clearing_algo}}

By Proposition \ref{prop:exist} there exists a minimal clearing solution $(w^*,\gamma^*) \in [0,L_U]\times[0,s+\alpha h]$, provided the bank is solvent. The left-hand side of the solvency condition \eqref{eq:solvent} reads as
\[
x + \gamma \bar f(\gamma) + (s-\gamma)f(\gamma) + h + \ell,\quad \text{for} \quad \gamma\in[0,s],\quad \text{and}
\]
\[
x + \gamma \bar f(\gamma)  + (s+\alpha h-\gamma) f(\gamma) + \ell, \quad \text{for}\quad \gamma\in(s,s+\alpha h].
\]
Following Remark \ref{rem:increas_non-increase}, these are both non-increasing functions of $\gamma$ on the respective domains. Moreover, at $\gamma=s$, there is a jump of size $(\alpha f(s)-1)h \leq 0$, since $f(s)\leq 1$ by Assumption \ref{ass:idf} and $\alpha \in [0,1]$. Consequently, if the bank was already insolvent at some level of liquidations $\gamma$, it is also insolvent for all larger values. It therefore suffices to check for solvency at the termination of the algorithm, since the algorithm is increasing in the value of $\gamma^*$.

By construction, we must have that either $\gamma^*=0$ (no sales), $\gamma^*\in (0,s]$ (run without re-marking of HtM), $\gamma^*\in (s,s+\alpha h)$ (run with re-marking of HtM), or $\gamma^*=s+\alpha h$ (illiquidity). Studying these case-by-case, proceeding through seven steps in increasing order with respect to the values of $(\gamma^*,w^*)$, we will be able to conclude that the minimal clearing solution is indeed realized by Algorithm \ref{alg:clearing}.

\textbf{Step 1: No sales.} Assume $\gamma^* = 0$. Then $w^* = \Phi_w(0) = L_U \wedge [\lm L - (\lm-1)(x+sp+h+\ell)]^+$. This is a clearing solution if and only if $w^* \leq x$. This, in turn, holds if and only if $L_U \leq x$ or $\lm L - (\lm-1)(x+sp+h+\ell) \leq x$. The latter holds if only if
\[
L \leq \frac{x}{\lm} + \frac{\lm-1}{\lm}(x+sp+h+\ell)=x + \frac{\lm-1}{\lm}[sp+h+\ell].
\]

\textbf{Step 2: Run without re-marking HtM, case (i).} Suppose $\gamma^* \in(0,s]$. Then $w^*\in (x,L_U]$ with $L_U>x$. For this step, assume $w^*\in(x,L_U)$. Since $\gamma^* \in(0,s]$, we can see that $w^*=\Phi_w(\gamma^*)$ holds if and only if
\begin{equation}\label{eq:HtM_2*}
L_U \geq \lm L - (\lm-1)(x + \gamma^* \bar{f}(\gamma^*) + (s-\gamma^*)f(\gamma^*) + h + \ell).
\end{equation}
Note that $w^*$ equals the right-hand side of \eqref{eq:HtM_2*}. Moreover, $\gamma^*$ must satisfy $\gamma^* \bar{f}(\gamma^*) = w^* - x$ and it is the unique such solution, since the left-hand side is strictly increasing in $\gamma^*$ on $[0,s+h]$ (by Assumption \ref{ass:idf}). Inserting $w^*=x+\gamma^* \bar{f}(\gamma^*)$ in \eqref{eq:HtM_2*} and recalling that the right-hand side equals $w^*$, we obtain
\[
w^* = L - (1-\frac{1}{\lm})((s-\gamma^*)f(\gamma^*)+h+\ell).
\]
Thus, the liquidation $\gamma^*\in(0,s]$ satisfies
\begin{equation}\label{eq:HtM_2_b}
\gamma^* \bar f(\gamma^*) + (1-\frac{1}{\lm})(s-\gamma^*)f(\gamma^*) = L - x - (1-\frac{1}{\lm})(h+\ell),
\end{equation}
and it must be the unique solution to this equation on $(0,s]$, since the left-hand side is strictly increasing in $\gamma^*$ on $[0,s]$ by Assumption \ref{ass:idf-2} and Remark \ref{rem:increas_non-increase}. This is possible if and only if 
\begin{equation}\label{eq:HtM_2_c}
L-x-(1-\frac{1}{\lm})(h+\ell) \in \bigl[(1-\frac{1}{\lm})sp , s\bar f(s)\bigr].
\end{equation}
Consequently, we have a clearing solution if and only if both \eqref{eq:HtM_2_c} and \eqref{eq:HtM_2*} hold with $\gamma^*$ in \eqref{eq:HtM_2*} being the unique solution to \eqref{eq:HtM_2_b}.

    \textbf{Step 3: Run without re-marking HtM, case (ii).} Now assume $\gamma^*\in(0,s]$ and $w^*=L_U$. Then $\gamma^*$ satisfies $\gamma^* \bar f(\gamma^*) = w^* - x= L_U - x$. As the left-hand side is strictly increasing in $\gamma^*$ on $[0,s+h]$, we have a unique solution which is in $(0,s]$ if and only if $L_U\in (x,x+s\bar f(s)]$. With $\gamma^*\leq s$, we have $w^*=\Phi_w(\gamma^*)=L_U$ if and only if $L_U \leq \lm L - (\lm-1)(x+\gamma^*\bar f(\gamma^*) + (s-\gamma^*)f(\gamma^*)+h+\ell)$. Writing $L=L_I+L_U$, this re-arranges to
\begin{equation}\label{eq:HtM_1}
L_I \geq (1-\frac{1}{\lm})[(s-\gamma^*)f(\gamma^*)+h+\ell].
\end{equation}
Consequently, $(\gamma^*,w^*)$ is a clearing solution if and only if $L_U\in(x,x+s\bar{f}(s)]$ and \eqref{eq:HtM_1} holds for the unique solution $\gamma^* \in(0,s]$ of $\gamma^* \bar f(\gamma^*)= L_U - x$.

 \textbf{Step 4: Re-marking HtM, case (i).} Suppose $\gamma^*\in (s,s+\alpha h)$. Then $w^* \in (x,L_U]$. For this step we assume $w^*\in (x,L_U)$. We have $w^*=\Phi_w^*(\gamma^*)$ if and only if
 \begin{equation}\label{eq:HtM_1_a}
 L_U \geq \lm L - (\lm-1)(x+\gamma^*\bar f(\gamma^*) + (s+\alpha h-\gamma^*)f(\gamma^*) + \ell),
 \end{equation}
and $w^*$ is then given by the right-hand side of \eqref{eq:HtM_1_a}. Noting that $\gamma^*$ must be the unique solution of $\gamma^* \bar f(\gamma^*)=w^*-x$ (where the left-hand side is strictly increasing in $\gamma^*$ on $[0,s+h]$), we can insert this in \eqref{eq:HtM_1_a} and solve for
\[
w^*=L - (1-\frac{1}{\lm})((s+\alpha h-\gamma^*)f(\gamma^*)+\ell).
\]
In turn, $\gamma^*\in(s,s+\alpha h)$ must solve
 \begin{equation}\label{eq:HtM_1_b}
 \gamma^* \bar f(\gamma^*) + (1-\frac{1}{\lm})(s+\alpha h-\gamma^*)f(\gamma^*) = L-x-(1-\frac{1}{\lm})\ell,
 \end{equation}
and it must the unique such solution since the left-hand side is strictly increasing on $[0,s+\alpha h]$ by Assumption \ref{ass:idf-2}. This is feasible if and only if
 \begin{equation}\label{eq:HtM_1_c}
 L-x-(1-\frac{1}{\lm})\ell \in \bigl[s\bar f(s) + (1-\frac{1}{\lm})\alpha hf(s) , (s+\alpha h)\bar f(s+\alpha h)\bigr].
 \end{equation}
In conclusion, we have a clearing solution if and only if \eqref{eq:HtM_1_c} and \eqref{eq:HtM_1_a} hold, when $\gamma^*$ in \eqref{eq:HtM_1_a} is given by the unique solution to \eqref{eq:HtM_1_b}.

\textbf{Step 5: Re-marking HtM, case (ii).} Now assume $\gamma^*\in (s,s+\alpha h)$ and $w^*=L_U$. Then $\gamma^* \bar f(\gamma^*)=L_U-x$, which is possible if and only if $L_U\in(x,x+(s+\alpha h)\bar f(s+\alpha h))$. Moreover, we see that $\Phi_w(\gamma^*)=L_U$ holds if and only if
\begin{equation}\label{eq:HtM_2}
L_I \geq (1-\frac{1}{\lm})[(s+\alpha h-\gamma^*)f(\gamma^*)+\ell].
\end{equation}
We thus have a clearing solution if and only if $L_U\in(x,x+(s+\alpha h)\bar f(s+\alpha h))$ and the unique solution $\gamma^*$ to $\gamma^* \bar f(\gamma^*)=L_U-x$ satisfies \eqref{eq:HtM_2}.

\textbf{Step 6: Illiquidity.} Finally, assume $\gamma^* = s + \alpha h$. Then 
\[
w^* = \Phi_w(s+\alpha h) = L_U \wedge [\lm L - (\lm - 1) (x + (s+\alpha h)\bar f(s+\alpha h) + \ell)].
\]
This is a clearing solution if and only if $w^* - x \geq (s+\alpha h)\bar f(s+\alpha h)$. Given the expression for $w^*$, this holds if and only if either
    \begin{align*}
    &\lm L - (\lm - 1)(x + (s+\alpha h)\bar f(s+\alpha h) + \ell) \geq L_U \quad \text{and} \quad L_U \geq x+ (s+\alpha h)\bar f(s+\alpha h),\quad \text{or}\\
   &\lm L - (\lm - 1)(x + (s+\alpha h)\bar f(s+\alpha h) + \ell) < L_U
   \quad \text{and} \quad L \geq x + (s+\alpha h)\bar f(s+\alpha h) + (1 - \frac{1}{\lm})\ell,
    \end{align*}
since the last inequality is equivalent to
\[
\lm L - (\lm - 1)\bigl(x + (s+\alpha h)\bar f(s+\alpha h) + \ell)\bigr) \geq x + (s+\alpha h)\bar{f}(s+\alpha h).
\]
This completes the proof.

\subsection{Proof of Propositions \ref{prop:franchise_1} and \ref{prop:franchise_2}}

\begin{proof}[Proof of Proposition \ref{prop:franchise_1}]
Note that we have
\begin{equation}\label{eq:not_restore_lm}
	A^{\delta}(w,\gamma)-\lm\,E^{\delta}(w,\gamma)=(1-\kappa)(L-w)-(\lm-1)E(0,\gamma),
\end{equation}
and hence
\begin{equation}\label{eq:kappa_geq_leq}
	\lambda^{\delta}(0,0)>\lm \iff (1-\kappa)L>(\lm-1)E(0,0).
\end{equation}
Thus, with $\kappa \geq 1$ and  $E(0,0)\geq 0$ in \eqref{eq:kappa_geq_leq}, we must have $\lambda^\delta(0,0)\leq \lm$, confirming that a run is never initiated. Next, if $E(0,0)< 0$ and $\lambda^\delta(0,0)>\lm$, then a run is initiated, and, since $\kappa \geq 1$, we have that the right-hand side of \eqref{eq:not_restore_lm} is increasing in $w$. Together with $E(0,\gamma)\leq E(0,0)$, this implies that a leverage ratio of $\lm$ cannot be restored, so we get full withdrawal requests $w^*=L_U$.
\end{proof}
\begin{proof}[Proof of Proposition \ref{prop:franchise_2}]
	The franchise value $\delta L$ is non-marketable, so it cannot be liquidated to cover withdrawals. Hence the mapping $\Phi_\gamma$ for the quantity sold in \eqref{eq:map_w} is unchanged. Regarding the map \eqref{eq:map_gamma} that gives the volume of withdrawals to restore $\lm$, for any given quantity sold $\gamma^*$, notice first that $\lambda^{\delta}(w,\gamma^*)=\lm$ is equivalent to
	\begin{align*}
		L-w&=(\lm-1)\big(A(0,\gamma^*)-L\big)+(\lm-1)\delta\,(L-w) \\
		&=(\lm-1)\big(A(0,\gamma^*)-L\big)+\kappa\,(L-w).
	\end{align*}
This re-arranges to $L-w=(\hat\lambda_{\max}-1)(A(0,\gamma^*)-L)$ 
	and, in turn,
	\[
	 w= L -(\hat\lambda_{\max}-1)(A(0,\gamma^*)-L)=\hat\lambda_{\max}L-(\hat\lambda_{\max}-1)A(0,\gamma^*),
	 \]
	 so the map \eqref{eq:map_gamma} with franchise value $\delta L$ becomes
		\[
	\Phi_w^{\delta}(\gamma^*)=L_U\we\big[\hat\lambda_{\max}L-(\hat\lambda_{\max}-1)A(0,\gamma^*)\big]^{+}.
	\]

In the case $E(0,0)\leq 0$, since $\kappa <1$, it follows from \eqref{eq:kappa_geq_leq} that we must have $\gamma^\delta(0,0)>\lm$, so a run is automatic. Moreover, we have $\hat\lambda_{\max}L-(\hat\lambda_{\max}-1)A(0,\gamma^*)\geq L$, as $\hat{\lambda}_{\mathrm{max}}>1$, so $\Phi_w^{\delta}(\gamma^*)=L_U$ for all $\gamma^*\geq0$, which confirms the first claim that there is a full run with $w^*=L_U$.

In the case $E(0,0) > 0$, both $\lambda^\delta(0,0)$ and $\lambda(0,0)$ are well-defined, and one readily sees from the definitions that, with $\kappa<1$, we have $\lambda^{\delta}(0,0)>\lm$ if and only if $\lambda(0,0)>\hat\lambda_{\max}$. Hence, a run being initiated is equivalent to a run being initiated for the same bank with zero franchise value and leverage threshold $\hat\lambda_{\max}$. By the above, the clearing maps $\Phi^\delta_w$ and $\Phi^\delta_\gamma$ correspond exactly to the latter situation, so the second claim follows. Finally, note that the baseline solvency condition \eqref{eq:solvent} can be written in shorthand as $A(0,\gamma^*)>L$, where $w^*$ is the equilibrium withdrawal requests. With the franchise value, this condition changes to $A(0,\gamma^*)+\delta(L-w^*)>L$.
\end{proof}

\subsection{Proof of Proposition \ref{prop:optim}}

Suppose first that $ L_U \leq x$ or $L \leq x + \bar{l}(\bar A + \ell)$. In the first situation, there is no risk of a run on the marketable securities. If, instead, we are in the other situation, then a run is possible depending on the HtM versus AfS designation. However, by taking $s^*=0$, we have $\lm L - (\lm-1)(x+s^*p_1 + h + \ell) =  \lm L - (\lm-1)(x+\bar{A}+ \ell) \leq  x$ at time $1$, no matter what $p_1$ is. With this choice, Proposition \ref{prop:clearing_algo} therefore gives that the clearing solution is of the `No sales' type, for any $p_1\in(0,1)$, and hence $s^*=0$ is the minimizer of \eqref{eq:optim-s}, as claimed.

From here on, suppose instead that $ L_U > x$ and $L > x + \bar{l}(\bar A + \ell)$. Since $ sp_1 + h\leq \bar{A} $, for any choice of $s\in [0,\bar{A}]$, it follows that, at time $1$, we have $\lm L - (\lm-1)(x+sp_1 + h + \ell) < x$ (along with $L_U>x$), so, by Proposition \ref{prop:clearing_algo}, we cannot have a `No sales' (minimal) clearing solution. Thus, we can proceed by identifying the feasible regions yielding that the (minimal) clearing solution belongs to either of the two `Run without re-marking HtM' scenarios in Proposition \ref{prop:clearing_algo}. We refer to these regions as partial or full withdrawals, and we denote the minimal attainable AfS over each by, respectively, $s_\mathrm{PW}$ or $s_\mathrm{FW}$ (assigning the value $+\infty$ if the region is empty).

We begin by characterizing the partial withdrawal region. Since $p_1 \leq 1$, and since we assume $L > x + \bar{l}(\bar A + \ell)$, by Proposition \ref{prop:clearing_algo} partial withdrawals are feasible for $s\in[0,\bar A]$ if and only if
   \begin{align}
	&L_U \geq \lm L - (\lm-1)\bigl(x+	\bar\gamma(s)  \bar f(	\bar\gamma(s) ) + (s-\bar \gamma(s))f(\bar \gamma(s))+\bar A-s+\ell \bigr), \label{eq:LU_PW}\\
 	&L-x-\bar{l}(\bar A-s+\ell) \leq s\bar f(s),\quad \text{and} \label{eq:L_bound_PW}\\
	&
	\bar\gamma(s) \bar f(	\bar\gamma(s) ) + \bar{l} (s-	\bar\gamma(s) )f(\bar\gamma(s)) = L - x - \bar{l} (\bar A-s+\ell), \label{eq:gamma_bar_PW}
\end{align}
for some $	\bar\gamma(s) \in(0,s]$. Using $\bar f(s)  = p_1 (1-bs/2) $ in \eqref{eq:L_bound_PW}, we obtain the quadratic expression
\begin{equation}
	-\frac{b}{2}p_1 s^2 + (p_1 -\bar l)s \geq L-x-\bar{l}(\bar A + \ell).
\end{equation}
As the right-hand side is strictly positive, we can confirm that this holds for $s\geq 0$ if and only if
\begin{equation}\label{eq:PW_constraint1}
	\frac{p_1 - \bar l - M}{bp_1} \leq s \leq \frac{p_1 - \bar l + M}{bp_1} \quad \text{with}\quad M:=\sqrt{(p_1 - \bar{l})^2 - 2bp_1 \bigl(L-x-\bar{l}(\bar A+\ell)\bigr)}
\end{equation}
and
\[
p_1 > \bar{l} + \sqrt{2bp_1\bigl(L-x-\bar{l}(\bar A + \ell)\bigr)}.
\]
The latter can be seen to hold if and only if
\begin{equation}\label{eq:PW_constraint2}
p_1 \geq \bar l + b\bigl(L-x-\bar{l}(\bar A + \ell)\bigr) + C
\end{equation}
for $C$ as in the statement of the proposition.
Now, for any $s$ in the above range (recalling also that $\bar l s p_1 \leq L-x-\bar{l}(\bar A-s+\ell)$), by Step 2 in the proof of Proposition \ref{prop:clearing_algo}, there is a unique $\bar\gamma(s) \in(0,s]$ satisfying \eqref{eq:gamma_bar_PW}. Inserting the expressions for $f$ and $\bar{f}$ in \eqref{eq:gamma_bar_PW}, we obtain a quadratic equation
\[
\bar{\gamma}^2 p_1 (\bar{l}-\frac{1}{2})b + \bar\gamma p_1 [1 - \bar{l}(sb+1)] = L-x - \bar{l}[\bar A + \ell - s(1-p_1)],
\]
in  $\bar\gamma$. Knowing that each $s$ in the above range corresponds to a unique $\bar{\gamma}\in [0,s]$, we can solve the above equation to yield the expression for $s\mapsto \bar \gamma(s)$ in the statement of the proposition. Further, we can observe that this map is strictly increasing in $s$, as follows directly from $f(\bar \gamma(s)) <1 $ and the fact that, for fixed $s$, the left-hand side of \eqref{eq:gamma_bar_PW} is strictly increasing in the value of $\bar\gamma(s)$ on $[0,s]$. This in turn ensures that the right-hand side of \eqref{eq:LU_PW} is strictly increasing in $s$. Let $G(s)$ denote the right-hand side of \eqref{eq:LU_PW}. If $G(\bar{A}) \leq L_U$, then all $s\in[0,\bar{A}]$ satisfy \eqref{eq:LU_PW}, and we set $\bar{s}:=\bar{A}$. If $G(0) \geq L_U$, then at most $s=0$ can satisfy \eqref{eq:LU_PW}, and we set $\bar{s}:= 0$. If $G(0) < L_U < G(\bar{A}) $, then we can define $\bar{s}\in (0,\bar{A})$ to be the unique value for which there is equality in \eqref{eq:LU_PW}. With these definitions, we have that $\eqref{eq:LU_PW} $ holds for a given $s\in(0,\bar{A}]$ if and only if $s \leq \bar{s}$. Noting that \eqref{eq:PW_constraint1}--\eqref{eq:PW_constraint2} enforces $s>0$, we conclude that the feasible region for partial withdrawals is given by the constraints \eqref{eq:PW_constraint1}--\eqref{eq:PW_constraint2} and $s\leq \bar{s}$. If this region is non-empty (in particular implying $\bar s \in(0,\bar{A}]$), then clearly the minimal $s$ over the region is $s=(p_1 - \bar{l}-M)/bp_1$.

Next, we turn to full withdrawals. Given $s\in [0,\bar{A}]$, let $\gamma^*\in[0,\bar{A}]$ denote the corresponding quantity sold (in the minimal clearing solution). Since $L_U>x$, by Proposition \ref{prop:clearing_algo} full withdrawals are feasible if and only if $\gamma^* \bar f(\gamma^*) = L_U - x$ with
   \begin{equation}\label{eq:FW_constraints}
    s\bar f(s) \geq L_U -x \quad \text{and}\quad L_I \geq \bar{l}\bigl( (s-\gamma^*(s))f(\gamma^*(s)) + \bar A -s  + \ell\bigr)
   \end{equation}
Writing out $\bar{f}(\gamma)=p_1(1-b\gamma/2)$, the above equality yields a quadratic equation
\[
-p_1 \frac{b}{2}(\gamma^*)^2 + p_1 \gamma^* = L_U -x
\]
in $\gamma^*$. Since $L_U >x$, this has positive solutions $\gamma^*>0$
if and only if $p_1 > 2b(L_U-x)$.
We cannot have $\gamma^*\geq 1/b$, as our assumptions enforce $b < 1/\bar A$, so we would have $\gamma^*>\bar{A}$. Solving for $\gamma^*$ thus gives the result that the only value allowing for $\gamma^*\in [0,\bar{A}]$ is
\begin{equation}\label{eq:FW_gamma}
\gamma^*=\frac{1 -\sqrt{1 -  2b(L_U-x)/p_1}}{b}\quad \text{with}\quad   p_1 > 2b(L_U-x).
\end{equation}
It furthermore follows from these observations that
$s\bar f(s) \geq L_U -x$ holds for a given $s\in[0,\bar{A}]$ if and only if $s\geq \gamma^*$ with $\gamma^*$ given by \eqref{eq:FW_gamma}. Since $\bar{f}(\gamma^*)<1$, the second constraint in \eqref{eq:FW_constraints} holds if and only if
\begin{equation}\label{eq:FW_last_constraint}
s\geq \frac{\bar A+\ell - \gamma^* p_1  (1-b \gamma^*)-L_I/\bar{l}}{1-p_1 + p_1 b\gamma^*},
\end{equation}
where we have written out the expressions for $f$ and $\bar f$. Set $s_1:=\gamma^*$ and let $s_2$ denote the right-hand side of \eqref{eq:FW_last_constraint}. For full withdrawals to be feasible, we need $p_1 > 2b(L_U-x)$ and $\max \{s_1,s_2\} \leq \bar{A}$. In that case, the feasible values are $s\in[0,\bar{A}]$ with $s \geq \max\{s_1, s_2\}$, so the minimum of \eqref{eq:optim-s} is achieved at $\max\{s_1, s_2\}$, as desired. This completes the proof.

\bibliographystyle{agsm}
\bibliography{dbs_references}

\end{document}